\DeclareFontFamily{U}{mathb}{\hyphenchar\font45}
\DeclareFontShape{U}{mathb}{m}{n}{
      <5> <6> <7> <8> <9> <10> gen * mathb
      <10.95> mathb10 <12> <14.4> <17.28> <20.74> <24.88> mathb12
      }{}
\DeclareSymbolFont{mathb}{U}{mathb}{m}{n}
\DeclareMathSymbol{\Asterisk}{3}{mathb}{"06}
\DeclareMathAlphabet{\mathscrbf}{OMS}{mdugm}{b}{n} %Bold \mathscr
\DeclareMathOperator{\wt}{wt}
\newcommand{\wtp}{\wt_{\mathsf P}} %Pauli weight
\DeclareMathOperator{\conv}{\ast}
\DeclareMathOperator{\Conv}{\Asterisk}%Convolutions
\newcommand{\su}[1]{^{(#1)}}
\newcommand{\primed}{^{\prime}}
\newcommand{\iverson}[1]{1[#1]} %Iverson Bracket, change this to Kronecker delta if wanted
\newcommand{\fourier}[1]{\widetilde{#1}}
\newcommand{\stabs}{\mathscr{S}}
\newcommand{\stabssymp}{\mathscrbf{S}}
\newcommand{\stabsclass}{C^\perp}
\newcommand{\inprod}[2]{\langle #1, #2 \rangle}
\newcommand{\inprodp}[2]{\inprod{#1}{#2}_{\mathsf P}}
\newcommand{\inprodds}[2]{\inprod{#1}{#2}_{\mathrm {DS}}}
\newcommand{\syn}{\mathcal{O}} %syndrome
\newcommand{\Chanset}{\Gamma}
\newcommand{\chanset}{\gamma}
\newcommand{\detectableset}{\Chanset\su D}
\newcommand{\allones}[1]{\mathbf{1}_{#1}^{}\mathbf{1}^T_{#1}}
\newcommand{\abs}[1]{\left\vert #1 \right\vert} %absolute value with variable height
\newcommand{\absb}[1]{\bigl\vert #1 \bigr\vert}
\newtheorem{theorem}{Theorem} %\newcommand{\theoremautorefname}{theorem}
\newtheorem{lemma}[theorem]{Lemma} %\newcommand{\lemmaautorefname}{Lemma}
\newtheorem{corollary}[theorem]{Corollary} %\newcommand{\corollaryautorefname}{Corollary}
\begin{document}
%Title of paper
\title{Pauli channels can be estimated from syndrome measurements in quantum error correction
}

\author{Thomas Wagner}
\email[]{thomas.wagner@uni-duesseldorf.de}
\author{Hermann Kampermann}
\author{Dagmar Bru\ss}
\author{Martin Kliesch}
\email[]{science@mkliesch.eu}
\affiliation{Institut für Theoretische Physik, Heinrich-Heine-University D{\"u}sseldorf,  Germany}
%\affiliation{\hhu}
\begin{abstract}
\end{abstract}

\maketitle

\begin{abstract}
The performance of quantum error correction can be significantly improved if detailed information about the noise is available, allowing to optimize both codes and decoders.
It has been proposed to estimate error rates from the syndrome measurements done anyway during quantum error correction.
While these measurements preserve the encoded quantum state, it is currently not clear how much information about the noise can be extracted in this way.
So far, apart from the limit of vanishing error rates, rigorous results have only been established for some specific codes.

In this work, we rigorously resolve the question for arbitrary stabilizer codes.
The main result is that a stabilizer code can be used to estimate Pauli channels with correlations across a number of qubits given by the pure distance.
This result does not rely on the limit of vanishing error rates, and applies even if high weight errors occur frequently. 
Moreover, it also allows for measurement errors within the framework of quantum data-syndrome codes.
Our proof combines Boolean Fourier analysis, combinatorics and elementary algebraic geometry.
It is our hope that this work opens up interesting applications, such as the online adaptation of a decoder to time-varying noise.
\end{abstract}

\section{Introduction}
Quantum error correction is an essential part of most quantum computing schemes.
It can be significantly improved if detailed knowledge about the noise affecting a device is available, as it is possible to optimize codes for specific noise \cite{robertson2017_tailoredcodessmallmemories, florjanczykbrun_insituadaptiveencoding}. A prominent example is the XZZX-surface code \cite{ataides2021_XZZXSurfaceCode}.
Furthermore, common decoding algorithms, such as minimum weight matching \cite{higgott2021_pymatching, dennis_topologicalquantummemory, nickersonbrown_analyisngcorrelatednoisesrfacecodeadaptvedecoding, obrien_adaptiveweightestimator} and belief propagation (see e.g.\ \cite{babar_ldpcreview}), can incorporate information about error rates to return more accurate corrections.
Other examples of decoders that can incorporate information about error rates are weighted union find \cite{huang2020_weightedunionfind} and tensor network decoders \cite{chub2022_tensornetworkdecoder, darmawan2018_tensornetworkdecoder}. 
The latter can also deal with correlated noise models, but are relatively slow.
In the context of stabilizer codes, noise is usually modeled using Pauli channels, which are simple to understand and simulate.
However, Pauli noise is more than a mere toy model, since randomized compiling can be used to project general noise onto a Pauli channel \cite{wallman_randomizedcompiling}, 
which has also been demonstrated experimentally \cite{ware2021_randomizedcompiling-experimental}.
Furthermore, it is known that quantum error correction decoheres noise on the logical level \cite{beale2018_qecdecoheresnoise}.
Consequently, there has been much interest and progress in the estimation of Pauli channels \cite{flammia2021_paulilearningpopulationrecovery,
harper2020_sparsepauliestimation,
flammia2020_efficientestimationofpaulichannels,
harper2020_efficientlearningofquantumnoise}.
Complementary to the standard benchmarking approaches, it has been suggested to perform (online) estimation of channels just from the syndromes of a quantum error correction code itself \cite{fujiwara_instantaneouschannelestimationcss,
fowler_scalableextractionoferrormodelsfromqec,
huo_2017,
wootton_qiskitbenchmarking,
florjanczykbrun_insituadaptiveencoding,
obrien_adaptiveweightestimator,
combes_insitucharofdevice, wagner2021_optimalnoiseestimationfromsyndromes}.
Such a scheme uses only measurements that do not destroy the logical information.
It is thus suited for online adaptation of a decoder to varying noise, for example by adapting weights in a minimum weight matching decoder \cite{obrien_adaptiveweightestimator, huo_2017}. It furthermore results in a noise model that can be directly used by the decoder \cite{fowler_scalableextractionoferrormodelsfromqec}.  
Experimentally, online optimization of control parameters in a 9-qubit superconducting quantum processor has been demonstrated in an experiment by Google \cite{Kelly_ScalableInSituCalibrationDuringErrorDetection}.

Since the state of the logical qubit is not measured, some assumptions are necessary in order for this estimation to be feasible.
However, the precise nature of these assumptions is currently not well understood.
In the general case, it is unclear for which combinations of noise models and codes the parameters can be identified using only the syndrome information. 
Apart from heuristics in the limit of very low error rates \cite{fowler_scalableextractionoferrormodelsfromqec, wootton_qiskitbenchmarking, huo_2017}, only two special cases have been rigorously treated.
For codes admitting a minimum weight matching decoder, such as the toric code, identifiability of a circuit noise model was shown by \citet{obrien_adaptiveweightestimator}.
In \cite{wagner2021_optimalnoiseestimationfromsyndromes}, identifiability results for the restricted class of perfect codes were proven. 

It is not a priori clear that an estimation of the error rates just from the syndrome statistics should be possible at all for arbitrary stabilizer codes.
There are several objections one could raise. For one, as mentioned above, the state of the logical qubit is not measured, so there is only limited information contained in the syndromes.
Phrased another way, there are generally exponentially many errors with the same syndrome, which we therefore cannot distinguish.
This also implies that the probability of a syndrome is an exponentially large sum of different error rates, and solving such a system for the error rates appears difficult at best.
While it has been suggested to simplify the problem by only taking into account the lowest weight error compatible with each syndrome \cite{fowler_scalableextractionoferrormodelsfromqec, wootton_qiskitbenchmarking, huo_2017}, this approximation strategy leads to demonstrably sub-optimal estimators \cite{wagner2021_optimalnoiseestimationfromsyndromes}.

\section{Results}

In this work, we show that the estimation task can be solved for arbitrary stabilizer codes.
For any given quantum code, we describe a general class of Pauli channels whose parameters can be identified from the corresponding syndrome statistics. 
Our results also take into account measurement errors by using the framework of quantum data-syndrome codes \cite{ashikhmin_quantumdatasyndromecodes,fujiwara_datasyndromecodes,delfosse_beyondsingleshotfaulttolerance}.
We prove that a large amount of information can be extracted from the syndrome statistics. 
In the following theorem we make this statement more precise in terms of the pure distance, which is defined as the minimum weight of an undetectable error (see \Cref{sec:quantum} for details).
The pure distance measures up to which weight errors can necessarily be distinguished by their syndrome, and for most codes it coincides with the weight of the smallest stabilizer.
Since there can be undetectable but logically trivial errors, the pure distance is usually much smaller than the distance.
As an example, for the family of toric codes the pure distance is 4 independent of code size. 
More generally, the pure distance will be constant for any family of quantum low density parity check (LDPC) codes, since the stabilizer weights are constant.
 
We are interested in estimating a Pauli channel, i.e.\ a probability distribution over Pauli errors, which describes the new error occurring before each round of error correction (working in a phenomenological noise model). 
It is commonly assumed that errors on each qubit occur independently, in which case the channel is described by one error rate for each qubit.
In this case, we say the noise is uncorrelated.
In a more general setting, the Pauli channel could act on many qubits, such that errors on these qubits are not independent.
If e.g.\ 2 qubit errors occur that are not a combination of independent single qubit errors, we say that the noise is correlated over 2 qubits. 
The corresponding 2 qubit error rates are then not simply a product of the single qubit error rates and must be additionally specified. 
For example, if $P(X_1X_2) \neq P(X_1)P(X_2)$, the errors on the first two qubits are correlated. 
This notion of correlations is made precise in \Cref{sec:MomentsAndNoise} for classical and \Cref{sec:Quantum-Identifiability} for quantum codes.
We now give an informal statement of our main results.
A formal statement is given in \Cref{thm:main-full}, which also takes into account many detectable errors beyond the pure distance.

\begin{theorem}[Main result, informal]
A stabilizer code with pure distance $d_p$ can be used to estimate Pauli noise with correlations across up to $\lfloor \frac{d_p-1}{2} \rfloor$ qubits.
\label{thm:main-informal}
\end{theorem}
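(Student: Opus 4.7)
The plan is to translate the identifiability question into linear algebra on the stabilizer group via the Pauli Fourier transform, and then to use the pure-distance hypothesis to guarantee that the resulting linear map is injective. The argument has three stages.

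First, I would compute the Hadamard/Boolean Fourier transform of the syndrome distribution. Each syndrome bit is the symplectic inner product $\langle E, g_i\rangle$ of the error $E$ with a stabilizer generator $g_i$, so the Hadamard transform of the syndrome distribution at a bit string $c$ equals the Pauli Fourier coefficient $\hat p(g_c)$, where $g_c = \sum_i c_i g_i \in \stabs$. Consequently, the syndrome statistics are in bijective correspondence with $\{\hat p(g)\}_{g\in\stabs}$, reducing identifiability to whether these coefficients determine $p$ under the low-correlation hypothesis.

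Second, I would formalize the noise model and set up the linear map of interest. A Pauli channel with correlations across at most $k$ qubits is parameterized by quantities indexed by $k$-local Pauli labels $a$ — for instance, low-weight Fourier coefficients, or coset probabilities taken modulo the unobservable stabilizer action. The observation map then takes the form of a matrix $M$ with entries $M_{g,a} = (-1)^{\langle a, g\rangle_\mathrm{symp}}$, rows indexed by $g\in\stabs$, columns by $k$-local labels $a$. The $a$-th column is the restriction to $\stabs$ of the Pauli character $\chi_a$.

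Third, I would prove injectivity of $M$. Two columns of $M$ coincide iff their labels differ by an element of the symplectic annihilator of $\stabs$. The pure-distance hypothesis $d_p \geq 2k+1$ precludes the existence of any nonzero such element of weight at most $2k$, so distinct $k$-local labels yield distinct restricted characters. Distinct characters of a finite abelian group are linearly independent, whence the columns of $M$ are linearly independent and $M$ has trivial kernel; identifiability follows. Measurement errors are then accommodated by replacing the stabilizer check matrix with that of the corresponding quantum data-syndrome code, and the same character argument applies verbatim with $d_p$ interpreted as the pure distance of the augmented code.

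The main obstacle I anticipate lies not in the character argument itself but in setting up the parameter space so that the pure distance $d_p$ — rather than the generally different logical distance — is the quantity that actually controls identifiability. This likely requires quotienting the channel by the stabilizer-indistinguishable part and then inverting the relation between the $k$-local moments and the stabilizer-indexed moments. Such an inversion is naturally combinatorial (Möbius / inclusion–exclusion over subsets of qubits) and produces a binomial system of polynomial equations in the local parameters whose solvability is the elementary algebraic-geometry input flagged in the introduction; tracking the Schur-complement block structure through this reduction is where I expect most of the technical bookkeeping to concentrate.
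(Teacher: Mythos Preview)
Your character-independence argument is sound as stated for the matrix $M_{g,a}=(-1)^{\langle a,g\rangle}$, but that matrix is the observation map only for the weaker model in which the error distribution is \emph{supported} on Paulis of weight at most $k$. The paper's model is the convolution of independent small-support channels, so the total error can have arbitrary weight and its Fourier coefficients $E(a)=\fourier P(\bar a)$ are not linear in any finite family of $k$-local parameters. The correct linear structure appears only after the M\"obius step you defer to the last paragraph: the transformed moments $F(b)$ for $b\in\hat\Gamma$ parametrize the channel, the measured stabilizer moments satisfy the binomial relations $E(s)=\prod_{b\subseteq s}F(b)$, and taking logarithms gives a linear system with coefficient matrix $D[s,a]=\iverson{a\subseteq s}$, not your $M$. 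The paper proves that this $D$ has full rank by computing $D^TD$ via an orthogonal-array property of the stabilizers, rescaling to the intersection matrix with entries $2^{|a\cap b|}$, and establishing positive-definiteness through iterated Schur complements; this is the core of their argument, not bookkeeping.

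That said, your character idea yields a genuine shortcut the paper does not exploit. Expanding $\iverson{a\subseteq s}=2^{-|a|}\sum_{b\subseteq a}(-1)^{|b|}(-1)^{b\cdot s}$ writes each column of $D$ as an invertible upper-triangular combination (in the subset order) of the Euclidean characters $\chi_b(s)=(-1)^{b\cdot s}$ restricted to the stabilizer group. Your pure-distance argument then applies verbatim to these characters: two low-weight labels $b,b'$ give the same restricted character iff $b+b'$ lies in the Euclidean annihilator of $\stabs$, equivalently iff $\overline{b+b'}$ has zero syndrome, which is ruled out by $\wtp(\overline{b+b'})\le 2k<d_p$. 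Hence the restricted characters are linearly independent, and so are the columns of $D$. This would replace the entire orthogonal-array and Schur-complement analysis by a two-line character computation. The gap in your proposal is therefore not the character argument itself but its placement: it establishes full rank of $D$ \emph{after} the M\"obius/binomial reduction rather than furnishing a linear observation map prior to it.
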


While this result is stated in a non-constructive fashion, its derivation suggests a concrete estimation protocol. 
We give a first heuristic discussion of the resulting estimators in \Cref{sec:PracticalEstimation}. 
A detailed analysis of such estimators is ongoing work. 

The key idea behind \Cref{thm:main-informal} is that the error distribution is fully described by a set of moments.
We will show that these moments can be estimated up to their sign by solving a polynomial system of equations.
The appearance of multiple discrete solutions, differing in some signs, reflects the symmetries of the problem.
For example in the case of single qubit noise, there is one symmetry for each logical operator of the code.
However, under the additional mild assumption that all error rates are smaller than $\frac 12$, all moments must be positive, and thus the estimate is unique. 
We stress that our result does not rely on the limit of vanishing rates, and still applies in the presence of high weight errors. Perhaps surprisingly, even different errors with the same syndrome can occur frequently, as long as they arise as a combination of independent lower weight errors. Our result is arguably the strongest one can reasonably expect, since error rates cannot be identifiable if multiple independently occurring errors have the same syndrome.

The \emph{adaptive weight estimator} by \citet{obrien_adaptiveweightestimator} can be viewed as solving a special instance of the general equation system we present here, which is applicable only for codes that admit a minimum weight matching decoder. This  connection is explained in \Cref{sec:ToricCodeExample}.

Our arguments combine Boolean Fourier analysis, combinatorics and elementary algebraic geometry. Interestingly, a connection between Pauli channel learning and Boolean Fourier analysis was recently pointed out in an independent work by \citet{flammia2021_paulilearningpopulationrecovery}.

We start our discussion with the motivating example of the toric code with independent Pauli-$X$ errors, where our results take a particularly simple form.
We then discuss the general results.
We first discuss the setting of classical codes, and later extend the results to quantum codes.
This is for ease of exposition, since the presentation is considerably simplified in the classical setting and thus the underlying concepts become more apparent.
 We stress that, in contrast to the quantum setting, classically one can measure the individual bits of a codeword without destroying the encoded information.
Thus, in the classical setting one does not have to rely solely on the syndrome information and can use easier techniques for error rates estimation.
However, for a quantum code this is indeed the only information that can be measured without destroying the encoded state, and this is where our results are most relevant.
Classically, our approach might still be useful in the setting of distributed source coding \cite{zia_ldpcerrorrateestimation}.

\subsection{Example: the toric code}
\label{sec:ToricCodeExample}

As a motivating example for the methods and proofs in the following sections, we derive an estimator for the simple setting of completely uncorrelated bit-flip noise on the toric code.
That is, we assume that errors on different qubits occur independently, possibly with a different rate for each qubit, and that only bit-flip (Pauli-$X$) errors occur on each qubit.
Thus, we consider the toric code essentially as a classical code.
This constitutes a simple alternative derivation of the solution given by \citet{obrien_adaptiveweightestimator}.

\begin{figure}[t]
\centering
\includegraphics[scale=2]{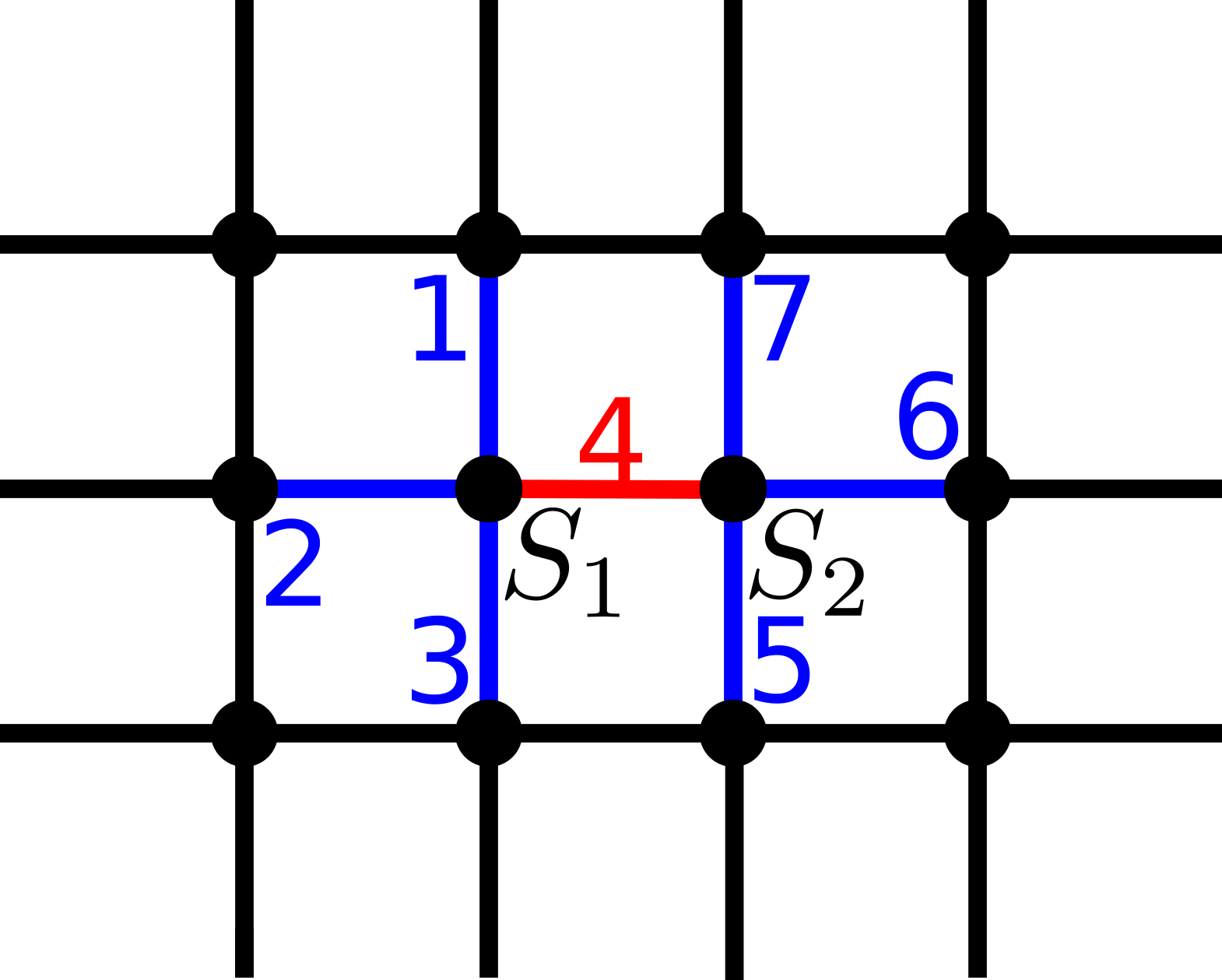}
\caption{A qubit (edge) and two adjacent star operators (vertices) on the toric code.}\label{fig:ToricCode}
\end{figure}

We focus on a single qubit and two adjacent $Z$-stabilizers, as illustrated in \cref{fig:ToricCode}.
Our task is to estimate the error rate $p_4$ of the marked qubit, or equivalently, the expectation value $E(Z_4) = 1 - 2p_4$.
Since errors on each qubit are independent, the expectation of a stabilizer measurement is simply the product of expectations of the adjacent bits.
Therefore we obtain the following three equations,
\begin{align*}
&E(S_1) = E(Z_1)E(Z_2)E(Z_3)E(Z_4) \\
&E(S_2) = E(Z_4)E(Z_5)E(Z_6)E(Z_7)  \\
&E(S_1S_2) = E(Z_1)E(Z_2)E(Z_3)E(Z_5)E(Z_6)E(Z_7)\, .
\end{align*}
This system admits a straightforward solution for the expectation of $Z_4$, 
\begin{equation}
E(Z_4) = \pm \sqrt{\frac{E(S_1)E(S_2)}{E(S_1S_2)}} \, .
\label{eq:SolutionToric}
\end{equation}
This coincides with the solution given by \citet[eq. (14)]{obrien_adaptiveweightestimator}, as explained in \Cref{sec:AppendixSO-Solution}. 
Notice that there is a choice of sign, which corresponds to deciding whether $p_4 > \frac{1}{2}$ or $p_4 < \frac{1}{2}$.
However, under the assumption $p_4 < \frac{1}{2}$ the solution is unique.
\\

%%% ==================================
\subsection{Identifiablity for classical codes}
%%% ==================================
Let us now turn to the general setting, first for classical codes.
We are interested in whether an error distribution can be estimated from repeated syndrome measurements alone. 
This is certainly not possible for completely arbitrary noise, since we do not measure the state of the logical qubit. However, we will show that for Pauli (and measurement) noise with limited correlations, the estimation is possible. 
As mentioned above, we will start with the setting of classical codes and perfect syndrome measurements. Then, we will show how to extend these results to quantum (data-syndrome) codes.

The key insight underlying our proof is that the estimation problem is best phrased in terms of moments instead of error rates. The proof then proceeds in the following steps. First, we notice that Fourier coefficients of the error distribution correspond to moments, and see that some of these moments can be estimated from the syndrome statistics. Then, we show that under certain independence assumptions, the full error distribution is characterized completely by a set of low weight transformed moments. We find that these transformed moments are related to the measured moments via a polynomial equation system. This system is described by a coefficient matrix $D$ whose rows essentially correspond to elements of the dual code. We then use local randomness properties of the dual code to find an explicit expression for the symmetric squared coefficient matrix $D^TD$, and finally show that $D^TD$ has full rank by using iterated Schur complements. 
This implies that the equation system has discrete solutions.

\subsubsection{Notation}
We use the short-hand notation $[n]\coloneqq \{1, \dots,n\}$ for the set of the first $n$ natural numbers.
For any set $A$, we denote its powerset as $2^A \coloneqq \{ B \mid B \subseteq A  \}$.
The field with two elements is denoted by $\mathbb F_2$. 
Often we will use $\mathbb F_2^n$ as a vector space over that field. 
That is, for $a,b\in \mathbb F_2^n$ the sum $a+b$ is understood to be taken component-wise modulo $2$. 
All vectors are to be understood as column vectors.
By $\wt(a) \coloneqq |\{i: a_i \neq 0\}|$ we denote the \emph{weight} of $a$. 
For any logical statement $X$ we denote by $\iverson{X}$ the \emph{Iverson bracket} of $X$, which assumes the value $1$ if $X$ is true and $0$ otherwise.
Naturally, trivial products, i.e.\ products over the empty set, are set to $1$ as in 
$\prod_{x\in \emptyset} f(x) \coloneqq 1$.
By $I_k$ we denote the $k\times k$ identity matrix and suppress $k$ when it can be inferred from the context.

%%% ------------------------------
\subsubsection{Classical codes and Boolean Fourier analysis}
%%% ------------------------------
Let us start with some basic elements of Boolean Fourier analysis. A detailed review of the topic is given in \cite{odonnel2014_analysisofbooleanfunctions} (note that we use a different normalization convention here). 
We frequently identify a vector $ s \in \mathbb{F}_2^n$ with its \emph{indicator set} $\{ i \in [n] \;  : \; s_i \neq 0\}$. 
For example, for $ s = (0,1,0,1,1,0) \in \mathbb{F}_2^6$ we also write
\begin{equation}
 s =  \{2,4,5\} \subseteq [6] \, ,
\end{equation}
such that we can write $4 \in  s$. 
For each $ s \in \mathbb{F}_2^n$, we define the \emph{parity function}
\begin{equation}
\begin{split}
&\chi_{ s}: \; \mathbb{F}_2^n \rightarrow \{-1,+1\}\, , \nonumber \\
&\chi_{ s}(e) = (-1)^{ s \cdot e} = (-1)^{\sum_{i \in  s} e_i} \, ,
\end{split}
\end{equation}
which is the group character of the abelian group $(\mathbb{F}_2^n,+)$.
For a function $f: \mathbb{F}_2^n \rightarrow \mathbb{R}$, its \emph{Boolean Fourier transform} $\fourier f$ is the function
\begin{equation}
\fourier f: 2^{[n]} \rightarrow \mathbb{R} \, , \quad
\fourier f( s) = \sum_{ e \in \mathbb{F}_2^n} f( e)\chi_{ s}( e) \; .
\end{equation} 
The Boolean Fourier transform is also  known under the name Walsh-Hadamard transformation. 
However, there are different conventions which lead to transforms with different orderings of bit strings and different notations for the characters $\chi_s$.
This transformation is invertible, and the inverse is given by
\begin{equation}\label{eq:FTinv}
f( e) = \frac{1}{2^n}\sum_{ s \subseteq [n]} \fourier f( s)\chi_{ s}( e) \, .
\end{equation} 
For two functions $f,g: \mathbb{F}_2^n \rightarrow \mathbb{R}$, their \emph{Boolean convolution} $f \ast g$ is defined by
\begin{equation}
f \ast g ( e) = \sum_{{e}\primed \in \mathbb{F}_2^n} f( e)g( e +  e\primed) \, .
\label{eq:def:convolution}
\end{equation}
As expected, convolutions become products under Boolean Fourier transform, 
\begin{equation}
\fourier{f \conv  g} = \fourier{f}\, \fourier{g}\, .
\end{equation} 

A classical linear code $C$, encoding $k$ bits into $n$ bits, is a $k$-dimensional subspace of $\mathbb{F}_2^n$. It can be described by its parity check matrix $ H \in \mathbb{F}_2^{(n-k) \times n}$, whose rows span the dual code $\stabsclass = \{ a \in \mathbb{F}_2^n \; : \;  a \cdot  c = 0 \; \forall  c \in C \}$. $\stabsclass$ can alternatively be interpreted as the set of parity functions which are 1 on all codewords. When an error $ e \in \mathbb{F}_2^n$ occurs, the outcomes of the parity measurements $ H$ can be summarized by the \emph{syndrome} 
\begin{equation}\label{eq:def:syn}
       \syn( e) \coloneqq  H  e \, .
 \end{equation} 
From $\syn(e)$, we can calculate the value of any parity measurement $ s \in \stabsclass$. We will assume that repeated rounds of syndrome measurements and corrections are performed. In each round, an error occurs according to the \emph{error distribution} $P: \mathbb{F}_2^n \rightarrow [0,1]$. We call the corresponding distribution of syndromes the \emph{syndrome statistics}. The error correction capabilities of a classical linear code are indicated by its \emph{distance}, which is defined as the smallest weight of an undetectable error, which is the same as the smallest weight of a codeword, 
\begin{equation}
d \coloneqq \min \{\wt( e) \; : \;  e \in C \setminus \{0\} \} \, .
\end{equation}

%%% ----------------------------------
\subsubsection{Moments and noise model}
\label{sec:MomentsAndNoise}
%%% ----------------------------------
We denote the Fourier transform of the error distribution $P: \mathbb{F}_2^n \rightarrow [0,1]$ with $E$, i.e.\ for each $ a \subset [n]$ 
\begin{equation}\label{eq:Edef}
E( a) \coloneqq \sum_{e \in \mathbb{F}_2^n} (-1)^{ a \cdot e} P(e) = \fourier P( a) \, .
\end{equation}
We interpret this as a \emph{moment} of the distribution $P$, since $E( a)$ is exactly the expectation value of the parity measurement $ a$ if one were to measure it repeatedly on errors distributed according to $P$. In particular, for $ s \in \stabsclass$, the corresponding moment $E( s)$ can be computed from the measured syndrome statistics, i.e. from the empirically measured frequency with which each syndrome occurs in repeated rounds of error correction.
We will always assume $E( a) \neq 0$ for all $ a \subseteq [n]$, which is, for example, fulfilled if $P(0) > 0.5$.
Thus, our task is to find the distribution $P$, given some of its Fourier coefficients.

We will show that this task is feasible if there is some independence between errors on different subsets of bits, such that there are no correlations across a large number of qubits at once.
To formalize this idea, we introduce a set of \emph{channel supports} $\Chanset \subseteq 2^{[n]}$ (we will consider a concrete choice of $\Chanset$ later).
For each $\chanset \in \Chanset$, there is an error channel that acts only on this subset, i.e.\ its error distribution $P_{\chanset}: \mathbb{F}_2^n \rightarrow [0,1]$ is only supported on errors $ e \in \mathbb{F}_2^n$ that are $0$ outside of $\chanset$.
Equivalently, identifying vectors with their indicator set, we can view this as  a distribution $P_\chanset : 2^{[n]} \rightarrow [0,1]$ which is only supported on $2^\chanset$.
Since all the individual channels act independently, and the total error is the sum of the individual errors, the total distribution of errors is then given by 
\begin{equation}\label{eq:conv}
P = \underset{\chanset \in \Chanset}{\Conv} P_\chanset \, ,
\end{equation}
with the convolution \eqref{eq:def:convolution}.
Physically, this expresses the action of several independent sources of errors, each only acting on a limited number of qubits. If $\Chanset = \{ \{1\},\dots,\{n\} \}$, this reduces to the commonly used model where each bit is affected by an independent channel. 
As a note, mathematically, this model corresponds to a convolutional factor graph, as introduced by \citet{mao2005_convolutionalfactorgraphs}.

This representation in terms of individual channels is  an over-parametrization of the total distribution if some supports in $\Chanset$ overlap. A first idea for a non-redundant distribution would be to use the set of all moments, but these are not all independent. Therefore, we define transformed moments $F( a)$ via a (generalized) inclusion-exclusion computation. Heuristically, we want to divide out all correlations on proper subsets of $ a$ in order to capture only correlations across the full size of $ a$.
This bears some similarity to the canonical parametrization of Markov random fields, see \cite{koller_pgm}.
For $ b \subseteq  a \subseteq [n]$, the \emph{Möbius function} (on the partially ordered set $2^{[n]}$) is given by
\begin{equation}
\mu( a, b) = (-1)^{| a \setminus  b|} = (-1)^{| a|+| b|} \, .
\end{equation}
For each $ a \subseteq [n]$, we define a transformed moment $F(a)$ via the \emph{inclusion-exclusion transform} 
\begin{equation}
F( a) \coloneqq \prod_{ b \subseteq  a} E( b)^{\mu( a, b)} \, .
\label{eq:F-E}
\end{equation}
It follows from the (multiplicative) generalized inclusion-exclusion principle (e.g.\ \cite{Aigner2007_ACourseInEnumeration}[Theorem 5.3], \cite[Theorem A.2.4]{roman2006_fieldtheory}) that the inverse of this transformation is given by  
\begin{equation}\label{eq:MoebiusInv}
E( a) = \prod_{ b \subseteq  a} F( b) \, .
\end{equation}
Now we will show that, depending on the choice of $\Chanset$, a small subset of transformed moments is already sufficient to uniquely specify the distribution.
Remember $E_\chanset( a) = \fourier P_\chanset( a)$ from \eqref{eq:Edef}.
Since $P$ is given by the convolution \eqref{eq:conv} of all individual error distributions $P_\chanset$ we have,
\begin{equation}
E(a) = \prod_{\chanset \in \Chanset} E_\chanset( a \cap \chanset) \, ,
\label{eq:E-gamma}
\end{equation}
where we used that $P_\chanset$ is only supported on $2^\chanset$, and thus $ E_\chanset(a) = \tilde P_\chanset(a) = \tilde P_\chanset(a \cap \chanset) = E_\chanset(a \cap \chanset)$.

We can now express the transformed moments by the parameters of the individual channels.  The proof is given in \Cref{sec:Appendix-ProofOfLemma:RescaledMoments-1}.
\begin{lemma}
For $\Gamma\subseteq 2^{[n]}$ let $E: 2^{[n]}\to\mathbb R$ satisfy \eqref{eq:E-gamma}. 
Then, the inclusion-exclusion transform \eqref{eq:F-E} satisfies
\begin{equation}
F( a) = \prod_{\chanset \in \Chanset :  a \subseteq \chanset} \prod_{ b \subseteq  a}E_\chanset( b)^{\mu( a, b)} 
\end{equation}
for all $a\subseteq [n]$. 
In particular, $F( a) = 1$ if there is no $\chanset \in \Chanset$ such that $ a \subseteq \chanset$.
\label{lemma:RescaledMoments-1}
\end{lemma}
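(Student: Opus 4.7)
The plan is to substitute the factorization of $E$ into the definition of $F$, interchange the product over $b\subseteq a$ with the product over $\chanset \in \Chanset$, and then for each fixed $\chanset$ isolate a vanishing inner sum of signed exponents whenever $a \not\subseteq \chanset$.

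Concretely, I would start from
\begin{equation}
F(a) \;=\; \prod_{b \subseteq a} E(b)^{\mu(a,b)} \;=\; \prod_{b \subseteq a} \prod_{\chanset \in \Chanset} E_\chanset(b \cap \chanset)^{\mu(a,b)} \;=\; \prod_{\chanset \in \Chanset} \prod_{b \subseteq a} E_\chanset(b \cap \chanset)^{\mu(a,b)},
\end{equation}
using \eqref{eq:F-E} and \eqref{eq:E-gamma}. The assumption $E(a)\neq 0$ ensures $E_\chanset(b\cap \chanset)\neq 0$, so all exponents are well defined.

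Next, for a fixed $\chanset$, I would split each $b \subseteq a$ uniquely as $b = b_1 \sqcup b_2$ with $b_1 \subseteq a \cap \chanset$ and $b_2 \subseteq a \setminus \chanset$. Then $b \cap \chanset = b_1$, so the factor $E_\chanset(b\cap\chanset) = E_\chanset(b_1)$ depends only on $b_1$, while $\mu(a,b) = (-1)^{|a|+|b_1|+|b_2|}$. Collecting all $b_2$ for fixed $b_1$, the exponent attached to $E_\chanset(b_1)$ becomes
\begin{equation}
(-1)^{|a|+|b_1|} \sum_{b_2 \subseteq a \setminus \chanset} (-1)^{|b_2|},
\end{equation}
which vanishes whenever $a \setminus \chanset \neq \emptyset$ and equals $(-1)^{|a|+|b_1|} = \mu(a,b_1)$ when $a \subseteq \chanset$. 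Hence the inner product over $b\subseteq a$ equals $1$ unless $a \subseteq \chanset$, and equals $\prod_{b_1 \subseteq a} E_\chanset(b_1)^{\mu(a,b_1)}$ otherwise. Substituting back into the outer product over $\chanset$ yields the claimed identity; the in-particular statement is just the empty-product convention.

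The only mildly subtle point is the bookkeeping in step two: recognising that the exponent sum over $b_2$ is a standard alternating sum $\sum_{b_2 \subseteq a\setminus\chanset}(-1)^{|b_2|} = (1-1)^{|a\setminus\chanset|}$, which is the mechanism by which all $\chanset$ with $a \not\subseteq \chanset$ are automatically removed. I do not foresee any real obstacle; the argument is a direct Möbius computation once the factorization of $E$ is plugged in.
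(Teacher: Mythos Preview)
Your proposal is correct and follows essentially the same route as the paper: substitute \eqref{eq:E-gamma} into \eqref{eq:F-E}, swap the two products, and for each fixed $\chanset$ decompose $b$ into its part inside $a\cap\chanset$ and its part inside $a\setminus\chanset$, so that the exponent reduces to the alternating sum $\sum_{b_2\subseteq a\setminus\chanset}(-1)^{|b_2|}=(1-1)^{|a\setminus\chanset|}$. The paper's proof differs only in notation (it writes $c=b\cup r$ and groups by the value $b=c\cap\chanset$), but the Möbius computation and the vanishing mechanism are identical.
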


Notice that trivially $F(\emptyset) = 1$. Thus, in conclusion, it suffices to determine the transformed moments $F( a)$ for $ a$ in
\begin{equation}
\hat \Chanset \coloneqq \{\emptyset \neq  a \subseteq \chanset \; : \; \chanset \in \Chanset \} \, .
\label{eq:def:Gammahat-classical}
\end{equation}
The standard moments \eqref{eq:MoebiusInv} are then determined by
\begin{equation}\label{eq:EFviaGammaHat}
E( a) = \prod_{ b \in \hat \Chanset :  b \subseteq  a} F( b) \, .
\end{equation}
Finally, the error distribution $P$ is determined from the standard moments by applying the inverse Fourier transform \eqref{eq:FTinv} to \eqref{eq:Edef}.

%%% ---------------------------------------------
\subsubsection{Identifiability and binomial systems}
%%% ---------------------------------------------

As discussed in the previous section, learning the transformed moments $(F( a))_{ a \in \hat \Chanset}$ is sufficient to uniquely determine the error distribution. However, the only thing we can measure are the standard moments $(E( s))_{ s \in \stabsclass}$ corresponding to elements of the dual code, since these fully describe the syndrome statistics.
Thus, learning the error distribution boils down to determining the transformed moments $(F( a))_{ a \in \hat \Chanset}$ from this information.
In particular, the error distribution is identifiable from the syndrome statistics if and only if the following system of polynomial equations admits a unique solution,
\begin{align}
&\text{Given } 
(E( s))_{ s \in \stabsclass} 
\text{ find } 
(F( b))_{ b\in \hat \Chanset} 
\text{ satisfying }
\nonumber \\[.2em]
&E( s) = \prod_{ b \in \hat \Chanset :  b \subseteq  s} F( b)
\qquad 
\forall  s \in \stabsclass \setminus \{0\} \, , 
\label{eq:BinomialSystemE-F}
\end{align}
where we have omitted the trivial equation arising from the zero element of the dual code.
This is a \emph{binomial} system, i.e.\ each equation only has two terms, a constant on the left-hand side and a monomial on the right-hand side. 
In the notation of \cite{chen2014_solutionsofbinomials}, the binomial system \eqref{eq:BinomialSystemE-F} can be expressed by the \emph{coefficient matrix} $D$ whose rows are labeled by elements of $\stabsclass\setminus \{0\}$ and whose columns are labeled by elements of $\hat \Chanset$.
The entry $D[ s, a]$ is the exponent of $F( a)$ in the monomial on the right-hand side of equation \eqref{eq:BinomialSystemE-F} for $E( s)$. 
We have transposed the notation comparing to \cite{chen2014_solutionsofbinomials}.
Explicitly this means that the $\abs{\stabsclass\setminus \{0\}}\times \absb{\hat \Chanset}$-matrix $D$ has the elements
\begin{equation}\label{eq:def:D1}
D[ s, a] = \begin{cases}
		 1 & \text{ if }  a \subseteq  s \\
		 0 & \text{ otherwise }
		 \end{cases} \, , 
\end{equation}
and the system is given by 
$\vec E = \vec F^D$ \cite[Eq.~(3)]{chen2014_solutionsofbinomials}, where $\vec E$ is the vector containing the elements $E( s)$ and $\vec F$ the elements $F( a)$, as appearing in the system \eqref{eq:BinomialSystemE-F}.  
In the special case of single bit noise, i.e.\ $\Chanset = \hat \Chanset = \{ \{i\} : i \in [n]\}$, the rows of the coefficient matrix $D$ are exactly the dual codewords $ s \in \stabsclass$. 

For now, let us assume that $D$ has full rank. It then follows from the theory of binomial system solving \cite[Proposition~2]{chen2014_solutionsofbinomials}
that the system \eqref{eq:BinomialSystemE-F} has a finite number of solutions, and these solutions only differ by multiplying some parameters with complex roots of unity.
Since we are only interested in real solutions, this means we can determine the transformed moments $F( a)$, and thus also the standard moments $E( a)$, up to a sign. 
Thus, if we restrict all moments to be positive, the error distribution is uniquely determined by the syndromes. A simple condition for all moments to be positive is that the error probabilities of all channels are smaller than $\frac 12$, i.e.\ $P_\chanset(0) > \frac 12$ for all $\chanset \in \Chanset$. This implies $E_\chanset( a) > 0$ and thus $E( a) > 0$ for all $ a \subseteq [n]$. In conclusion, the error distribution can be estimated uniquely from the syndromes, assuming that $D$ has full rank and that the error probability of each channel is smaller than $\frac 12$.

Let us stress that the appearance of multiple solutions, differing by the signs of some moments, reflects actual symmetries of the syndrome statistics as a function of the error rates.
For example, consider again case of independent single bit errors, such that $\Chanset = \hat \Chanset = \{ \{i\} : i \in [n]\}$.
Then the transformed moments $F(\{i\})$ are equal to the standard moments $E(\{i\})$, and the rows of $D$ are simply the elements of the dual code. 
Thus, each row of $D$ has even overlap with every codeword $c \in C$.
Thus, by \eqref{eq:BinomialSystemE-F}, flipping the signs of all moments $(E(\{i\}))_{i \in c}$ on the support of the codeword $c$ does not change the measured moments $(E( s))_{ s \in \stabsclass}$.
Since $E(\{i\}) = 1-2p_i$, where $p_i$ is the rate of errors on the $i$-th qubit, this simply means that flipping the error rates around $\frac 12$ on a codeword does not affect the syndrome statistics, i.e.\ we cannot distinguish these two sets of error rates. This observation shows that each codeword corresponds to a symmetry of the identifiability problem.

%%% -------------------------------------------
\subsubsection{The rank of the coefficient matrix}
%%% -------------------------------------------
We will now establish the most general set of error distributions for which unique identification of the error rates is possible. By the discussion in the previous section, this corresponds to finding the most general choice of $\Chanset$ for which the system \eqref{eq:BinomialSystemE-F} is solvable, i.e.\ for which the coefficient matrix $D$ is of full rank.

Denote the sets of bits that only support \emph{detectable} errors as
\begin{equation}\label{eq:def:Delta}
\detectableset \coloneqq \{\chanset \subseteq [n] \; : \; \syn(e) \neq 0 \; \forall e \subseteq \chanset \} \, , 
\end{equation}
where $\syn(e)$ is the syndrome \eqref{eq:def:syn} of $e$.
It is clear that, if one of the channels $\chanset \in \Chanset$ supports an undetectable error, one cannot estimate the corresponding error rate. 
Thus, identifiability can only hold if $\chanset \in \detectableset$. 
Similarly, if two different channels $\chanset_1,\chanset_2$ contain two errors with the same syndrome, the syndrome statistics only depends on the combined rate of those errors and thus the rates are not identifiable. Identifiability of the noise channel can only hold if $\chanset_1 \cup \chanset_2 \in \detectableset$ for all $\chanset_1,\chanset_2 \in \Chanset$. We will show that these are in fact the only restriction that must be fulfilled.

\begin{theorem}[Classical identifiability condition]
Consider a classical code with $n$ bits subject to noise with an error distribution described by channel supports $\Chanset \subseteq 2^{[n]}$.
Then the coefficient matrix $D$ defined in \eqref{eq:def:D1} is of full rank if and only if $\chanset_1 \cup \chanset_2 \in \detectableset$ for all $\chanset_1,\chanset_2 \in \Chanset$.
\label{thm:main}
\end{theorem}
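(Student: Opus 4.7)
My plan is to treat the two directions separately, following the blueprint in the introduction. For the sufficient direction I will compute $D^TD$ in closed form using the local randomness of the dual code, factor it, and verify invertibility by an explicit Gram presentation. For the converse I will construct a nonzero log-perturbation of the individual channel distributions that leaves all measurable moments $E(s)$, $s\in\stabsclass$, invariant; transporting this through the Möbius relation produces a nonzero element of $\ker D$.

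\textbf{Computation of $D^TD$.} Directly from \eqref{eq:def:D1},
\[
(D^TD)[a,a'] = \abs{\{s\in\stabsclass\setminus\{0\} : a\cup a'\subseteq s\}}.
\]
Pick $\chanset_a,\chanset_{a'}\in\Chanset$ with $a\subseteq\chanset_a$ and $a'\subseteq\chanset_{a'}$. The hypothesis gives $\chanset_a\cup\chanset_{a'}\in\detectableset$, and since $\detectableset$ is closed under taking subsets, $a\cup a'\in\detectableset$. This says no nonzero codeword of $C$ sits inside $a\cup a'$, i.e.\ the restriction map $\pi\colon\stabsclass\to\mathbb{F}_2^{a\cup a'}$ has trivial cokernel, hence is surjective. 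Each fiber has size $2^{n-k-\abs{a\cup a'}}$, and since $a\cup a'\neq\emptyset$ automatically excludes $s=0$, I conclude $(D^TD)[a,a']=2^{n-k-\abs{a\cup a'}}$.

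\textbf{Invertibility.} Using $\abs{a\cup a'}=\abs a+\abs{a'}-\abs{a\cap a'}$, I factor
\[
D^TD = 2^{n-k}\,\Lambda K\Lambda,\quad \Lambda[a,a]=2^{-\abs a},\quad K[a,a']=2^{\abs{a\cap a'}}.
\]
Since $\Lambda$ is an invertible diagonal matrix, it suffices to show $K$ is invertible. The identity $2^{\abs{a\cap a'}}=\sum_{b\subseteq[n]}\iverson{b\subseteq a}\iverson{b\subseteq a'}$ exhibits $K=UU^T$ with $U[a,b]=\iverson{b\subseteq a}$, so I only need the rows of $U$ to be linearly independent. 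Given any relation $\sum_{a\in\hat\Chanset}c_a\iverson{b\subseteq a}=0$, I pick $a^\ast$ maximal by inclusion among those with $c_{a^\ast}\neq 0$; evaluating at $b=a^\ast$ leaves only the $a=a^\ast$ term, forcing $c_{a^\ast}=0$, a contradiction. Equivalently, iterated Schur complements along any linear extension of the inclusion order on $\hat\Chanset$ yield strictly positive pivots.

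\textbf{Converse and expected obstacle.} For the necessary direction, let $c\subseteq\chanset_1\cup\chanset_2$ be a nonzero codeword and split $c=e_1+e_2$ disjointly with $e_1=c\cap\chanset_1\subseteq\chanset_1$ and $e_2=c\setminus\chanset_1\subseteq\chanset_2$. Define log-perturbations $\delta_1(a)=(-1)^{e_1\cdot a}$ on $2^{\chanset_1}$ and $\delta_2(a)=-(-1)^{e_2\cdot a}$ on $2^{\chanset_2}$ (other channels untouched). Using $e_i\subseteq\chanset_i$ and $s\cdot c=0\pmod 2$ for $s\in\stabsclass$, a short character-sum calculation gives $\delta_1(s\cap\chanset_1)+\delta_2(s\cap\chanset_2)=(-1)^{s\cdot e_1}-(-1)^{s\cdot e_2}=0$, so the perturbation preserves every $\log E(s)$ with $s\in\stabsclass$. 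Pushing it through \eqref{eq:F-E} via Möbius inversion yields a nonzero $v\in\ker D$, with $v_{e_1}\propto(-2)^{\abs{e_1}}$; the edge case $e_1=\emptyset$ (equivalently $c\subseteq\chanset_2$) is handled by the one-channel variant $\delta_2(a)=(-1)^{c\cdot a}-1$. The main obstacle I anticipate is the invertibility of $K$: although the identity $K=UU^T$ is clean in retrospect, it rests on the fact that $\hat\Chanset$, although not a full power set, is downward closed among nonempty subsets of $[n]$, and the maximal-element argument must exploit exactly this to succeed uniformly over all channel support geometries permitted by the hypothesis.
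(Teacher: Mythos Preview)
Your proof is correct, and the overall architecture of the sufficient direction matches the paper: compute $D^TD$ via the orthogonal-array property of $\stabsclass$ (Lemma~\ref{lemma:RandomStab-Classical}), rescale to the matrix $K$ with entries $2^{\abs{a\cap a'}}$, and show $K$ is positive-definite. The genuine difference lies in that last step. The paper embeds $K$ as a principal submatrix of the intersection matrix $M_t$ of Lemma~\ref{lemma:MPosDef} and then proves $M_t$ positive-definite through a rather long iterated Schur-complement computation (Lemmas~\ref{lemma:Schur-Complement-M-Simple} and~\ref{lemma:Schur-Complement-M-Full}). Your Gram identity $K=UU^T$ with $U[a,b]=\iverson{b\subseteq a}$ together with the maximal-element argument is a much shorter and more transparent route; it also shows directly that \emph{every} principal submatrix of $(2^{\abs{a\cap a'}})_{a,a'\subseteq[n]}$ is positive-definite, bypassing the need to pass through $M_t$. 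One small remark: your closing comment that the argument ``rests on'' downward closure of $\hat\Chanset$ is not quite right---the maximal-element step only needs the column $b=a^\ast$ to exist, which it does since columns range over all of $2^{[n]}$, so the argument works for an arbitrary finite family of index sets.

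For the converse, the paper gives only the informal identifiability discussion preceding the theorem and does not exhibit a kernel element of $D$. Your explicit construction via the log-perturbations $\delta_1,\delta_2$, pushed through the Möbius relation to obtain $\Delta\log F(a)=(-2)^{\abs a}\bigl(\iverson{a\subseteq e_1}-\iverson{a\subseteq e_2}\bigr)$, is a clean and complete argument; it supplies what the paper leaves implicit.
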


The proof is given in Sections \ref{sec:OrthogonalArry-maintext}. % and \ref{sec:RankM}.
Even a combination of two channels must not support an undetectable error.
We can also give an equivalent condition in terms of $\hat \Chanset$, which represents the set of errors that can occur ``independently''.
As explained in \Cref{sec:Appendix-ChannelSupports},
the assumption $\chanset_1 \cup \chanset_2 \in \detectableset$ for all $\chanset_1,\chanset_2 \in \Chanset$ is equivalent to $\syn(e_1) \neq  0$ and $\syn(e_1 + e_2) \neq 0$ for all $e_1,e_2 \in \hat \Chanset$ (viewed as binary vectors).
In other words, the error distribution is identifiable if undetectable errors and errors that have the same syndrome only occur as a combination of independent errors.
We stress that this is substantially weaker than the assumption that errors with the same syndrome never occur (which was made in some previous works such as \cite{fowler_scalableextractionoferrormodelsfromqec, huo_2017}).
Indeed, in the total error distribution different errors with the same syndrome can occur frequently, and we are still able to identify the error rates. We only assume that such errors arise as a combination of independent errors.

The most important example is the following. Consider an error model where there is an independent channel on every subset of $t$ bits, i.e.\ choose $\Chanset$ as 

\begin{align}
&\Chanset = \Chanset_t \coloneqq \{\chanset \subseteq [n] \; : \; |\chanset| = t \} \\
&\hat \Chanset = \Chanset_{\leq t} \coloneqq \{e \subseteq [n] \; : \; 0 < |e| \leq t \} \, . \label{eq:hatGamma_t}
\end{align}

This means we only have correlations across at most $t$ bits.
Then, \Cref{thm:main} implies that the error distribution is identifiable as long as $d \geq 2t+1$. 
\begin{corollary}[Distance based identifiability condition]
If the noise is described by channel supports $\Chanset = \Chanset_t$, the coefficient matrix $D$ is of full rank if the code has distance $d \geq 2t+1$.
\end{corollary}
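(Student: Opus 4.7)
The plan is to derive this corollary as a direct consequence of \Cref{thm:main}, by verifying its identifiability condition for the specific choice $\Chanset = \Chanset_t$. By \Cref{thm:main}, the coefficient matrix $D$ has full rank precisely when $\chanset_1 \cup \chanset_2 \in \detectableset$ for every pair $\chanset_1, \chanset_2 \in \Chanset$. So the task reduces to a short combinatorial/weight-counting argument: given that every $\chanset \in \Chanset_t$ has cardinality exactly $t$, I need to check that the union of any two such supports only carries detectable errors.

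First I would bound the size of such a union: for any $\chanset_1, \chanset_2 \in \Chanset_t$, the union trivially satisfies $\abs{\chanset_1 \cup \chanset_2} \le \abs{\chanset_1} + \abs{\chanset_2} = 2t$. Next, I would invoke the definition of the distance $d$: any nonzero error $e \in \mathbb{F}_2^n$ with $\wt(e) < d$ is not a codeword, hence $\syn(e) = He \neq 0$. Combined with the distance assumption $d \geq 2t+1$, any nonzero $e$ supported on $\chanset_1 \cup \chanset_2$ satisfies $\wt(e) \le 2t < d$, so it is detectable. This gives $\chanset_1 \cup \chanset_2 \in \detectableset$ by the definition in \eqref{eq:def:Delta}.

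Since this holds for every pair, the hypothesis of \Cref{thm:main} is met, and $D$ is of full rank. I do not foresee any real obstacle here; the argument is essentially one line once \Cref{thm:main} is granted, and the only subtlety is ensuring that the weight bound $\wt(e) \le 2t$ is strict enough to fall below the distance threshold, which is exactly why the condition reads $d \geq 2t+1$ rather than $d \geq 2t$.
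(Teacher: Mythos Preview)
Your proposal is correct and matches the paper's own proof, which simply observes that for a code with distance $d$, $\detectableset$ contains every set of size at most $d-1$; since $\absn{\chanset_1 \cup \chanset_2} \le 2t \le d-1$, the hypothesis of \Cref{thm:main} is satisfied. You have just spelled out this one-line argument in slightly more detail.
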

\begin{proof}
For a code with distance $d$, $\detectableset$ contains every set of size at most $d-1$.
\end{proof}

For $t=1$, we see that error rates of the standard single bit noise model can be identified as long as $d \geq 3$. Informally, identification is possible if error correction is possible. 

%%% ------------------------------------
\subsubsection{Orthogonal array properties of the coefficient matrix}
%%% ------------------------------------
\label{sec:OrthogonalArry-maintext}
This section is devoted to the proof of \Cref{thm:main}. Readers not interested in the proofs and mathematical techniques can skip to \Cref{sec:quantum} for the results in the quantum setting.
The first part of the proof is based on local randomness properties of the dual code. The elements of the dual code form a so-called \emph{orthogonal array}. Since the entries of the coefficient matrix $D$ are related to the dual code, we can use this property to derive an explicit expression for the symmetric squared coefficient matrix $D^TD$. The proof is then finished by computing the rank of $D^TD$, which is possible with the help of combinatorial results derived in the next section.
 
By $ T$ we denote the $|\stabsclass|\times n$-matrix formed by all elements in the span of the rows of the parity check matrix $ H$, i.e.\ the rows of $ T$ are exactly the elements of $\stabsclass$. It is known that the rows of $ T$ look ``locally uniformly random'' on any subset of up to $d - 1$ bits. One says that $ T$ is an orthogonal array of strength $d-1$. This property is relatively easy to see for linear codes (e.g.\ \cite{hedayat2012_orthogonalarrays}) and was shown for general (non-linear) classical codes by Delsarte \cite{delsarte1973_fundamentalcodecombinatorics}. 
We use a slightly extended version of the result for linear codes using the set $\detectableset$ from \eqref{eq:def:Delta} 
instead of the distance. A proof is given in \Cref{sec:Appendix-OrthogonalArrays}.
\begin{lemma}
Let $\chanset \in \detectableset$. In the restriction $ T^{|\chanset}$ of $ T$ to columns in $\chanset$, every bit-string appears equally often as a row.
\label{lemma:RandomStab-Classical}
\end{lemma}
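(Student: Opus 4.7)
The plan is to view the rows of $T^{|\chanset}$ as the image of the coordinate-projection map $\pi_\chanset: \stabsclass \to \mathbb{F}_2^{|\chanset|}$ that restricts a dual codeword $s$ to its entries indexed by $\chanset$. Since $\pi_\chanset$ is $\mathbb{F}_2$-linear, every element of its image is attained by exactly $|\ker \pi_\chanset|$ preimages, so if $\pi_\chanset$ is surjective, then every bit-string in $\mathbb{F}_2^{|\chanset|}$ appears as a row of $T^{|\chanset}$ exactly $|\stabsclass|/2^{|\chanset|}$ times. Thus the whole claim reduces to surjectivity of $\pi_\chanset$.

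To establish surjectivity I would argue by duality. The map $\pi_\chanset$ fails to be surjective precisely when there exists a nonzero $v \in \mathbb{F}_2^{|\chanset|}$ that is orthogonal to every vector in $\pi_\chanset(\stabsclass)$, i.e.\ $v \cdot (s|_\chanset) = 0$ for all $s \in \stabsclass$. Let $\tilde v \in \mathbb{F}_2^n$ denote the extension of $v$ by zeros outside of $\chanset$; then $v \cdot (s|_\chanset) = \tilde v \cdot s$, so the failure of surjectivity is equivalent to the existence of a nonzero $\tilde v \in \mathbb{F}_2^n$ supported inside $\chanset$ with $\tilde v \cdot s = 0$ for all $s \in \stabsclass$.

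Now I would use the code-dual relationship $(C^\perp)^\perp = C$, i.e.\ $\tilde v \cdot s = 0$ for all $s \in \stabsclass = C^\perp$ if and only if $\tilde v \in C$. Combined with $Hv = \syn(\tilde v) = 0 \Leftrightarrow \tilde v \in C$, this means $\pi_\chanset$ fails surjectivity precisely if there exists a nonzero $\tilde v \subseteq \chanset$ with $\syn(\tilde v) = 0$. But that is exactly what the hypothesis $\chanset \in \detectableset$ forbids. Hence $\pi_\chanset$ is surjective and the claim follows.

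The one subtle point to be careful about is the direction of duality and the fact that we use $(C^\perp)^\perp = C$, which holds cleanly over $\mathbb{F}_2$. Other than that, the argument is a direct unwinding of definitions, and I do not expect any substantial obstacle.
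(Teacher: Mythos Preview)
Your proof is correct and follows essentially the same approach as the paper's. The paper phrases the key step as ``the columns of $H^{|\chanset}$ are linearly independent'' (since $He \neq 0$ for nonzero $e \subseteq \chanset$), which immediately gives that the map $\zeta \mapsto \zeta^T H^{|\chanset}$ is surjective; you phrase it dually as ``the image of $\pi_\chanset$ has trivial annihilator, hence is all of $\mathbb{F}_2^{|\chanset|}$'' via $(C^\perp)^\perp = C$. These are the same rank argument viewed from opposite sides, and both conclude by counting fibers of a surjective linear map.
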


In other words, the rows of $ T$ look locally uniformly random on any choice of bits that only supports detectable errors.

In \Cref{sec:RankM} we prove the following statement, as a corollary of \Cref{lemma:Schur-Complement-M-Simple}.
\begin{lemma}[Positive-definiteness of intersection matrix]
\label{lemma:MPosDef}
Let $M_t$ be the matrix whose rows and columns are labeled by the elements of $\Chanset_{\leq t}$ (from \eqref{eq:hatGamma_t}) and which is defined entry-wise by
\begin{equation}
\label{eq:def:M}
M_{t}[ a, b] = 2^{| a \cap  b|}  \, .
\end{equation}
Then $M_t$ is positive-definite.
\end{lemma}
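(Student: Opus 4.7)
The plan is to exhibit $M_t$ as the Gram matrix of a family of linearly independent vectors. The key identity is the combinatorial decomposition
\begin{equation}
2^{|a\cap b|} = \big|\{c\subseteq[n] : c\subseteq a\cap b\}\big| = \sum_{c\subseteq[n]} \iverson{c\subseteq a}\,\iverson{c\subseteq b},
\end{equation}
which rewrites each entry $M_t[a,b]$ as an inner product. Define the real matrix $A$ whose rows are indexed by $a\in\Chanset_{\leq t}$ and whose columns are indexed by $c\subseteq[n]$, with $A[a,c]=\iverson{c\subseteq a}$. Then $M_t = A A^T$ by construction, so $M_t$ is automatically positive semi-definite.

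To upgrade this to strict positive-definiteness, I would show that the rows of $A$, namely the indicator vectors $\mathbf{1}_{2^a}\in\mathbb{R}^{2^{[n]}}$ for $a\in\Chanset_{\leq t}$, are linearly independent. Suppose $\sum_{a\in\Chanset_{\leq t}}\lambda_a \mathbf{1}_{2^a} = 0$ and that the set $S\coloneqq\{a : \lambda_a\neq 0\}$ is nonempty. Pick any $a^*\in S$ that is maximal with respect to set inclusion (such an element exists because $S$ is a finite poset under inclusion). Evaluating the relation at the index $c=a^*$ gives
\begin{equation}
0 \;=\; \sum_{a\in\Chanset_{\leq t}} \lambda_a \iverson{a^*\subseteq a} \;=\; \sum_{a\in S,\, a\supseteq a^*} \lambda_a \;=\; \lambda_{a^*},
\end{equation}
where the last equality uses the maximality of $a^*$ in $S$. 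This contradicts $a^*\in S$, so $S=\emptyset$ and the rows are linearly independent. Hence $A$ has full row rank, and $M_t=AA^T$ is positive definite.

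The main obstacle is essentially cosmetic rather than mathematical: the paper announces the result as a corollary of a more detailed Schur-complement computation (\Cref{lemma:Schur-Complement-M-Simple}), presumably because that computation yields finer recursive information used elsewhere (e.g.\ a determinant or explicit spectrum). By contrast, the plan above sidesteps the iterated Schur complement entirely by recognizing the Gram-matrix structure. The only nontrivial step is the linear-independence argument, which is handled cleanly by the maximal-element trick in the subset poset; all other steps are routine identities.
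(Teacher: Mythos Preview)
Your argument is correct and takes a genuinely different route from the paper. The paper proves \Cref{lemma:MPosDef} as a corollary of an explicit iterated Schur-complement computation (Lemmas \ref{lemma:Schur-Complement-M-Simple} and \ref{lemma:Schur-Complement-M-Full}): by a lengthy induction involving binomial identities they establish $M_t/M_{t-1} = I + \alpha_{t-1}^{-1}\allones{t}$, which is manifestly positive-definite, and then conclude by the standard Schur-complement criterion for definiteness. Your approach instead recognizes $M_t$ directly as a Gram matrix via $2^{|a\cap b|} = \sum_{c}\iverson{c\subseteq a}\iverson{c\subseteq b}$, reducing the question to linear independence of the ``zeta vectors'' $(\iverson{c\subseteq a})_{c}$, which you dispatch with a clean maximal-element argument (equivalently, the zeta matrix of the subset poset is unitriangular). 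Your proof is considerably shorter and more conceptual; the paper's approach yields the finer structural information that each successive Schur complement is a rank-one perturbation of the identity, but as you note, that extra information does not seem to be used elsewhere in the paper.
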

We call the matrix $M_t$ the \emph{intersection matrix}.
The dimensions of $M_t$ depend on $n$, but we do not make this dependence explicit in our notation.
Using \Cref{lemma:MPosDef}, we can finish the proof of \Cref{thm:main}.
\begin{proof}[Proof of \Cref{thm:main}]
We will prove that the coefficient matrix $D$ from \eqref{eq:def:D1} has full rank (over $\mathbb{R}$) by proving that $D^TD$ has full rank. 
First, note that by the definition \eqref{eq:def:Delta} of $\detectableset$, for any $\chanset \in \detectableset$, all subsets $a \subseteq \chanset$ are also elements of $\detectableset$.
Thus, the assumption $\chanset_1 \cup \chanset_2 \in \detectableset$ from \Cref{thm:main} implies $ a \cup  b \in \detectableset$ for all $ a \cup  b \in \hat \Chanset$, by the definition \eqref{eq:def:Gammahat-classical} of $\hat \Chanset$. 
The dot product of the columns of $D$ labeled by $ a,  b \in \hat \Chanset$ (as binary vectors) is the number of elements $ s \in \stabsclass \setminus \{0\}$ such that $ a \cup  b \subseteq  s$. Thus, by \Cref{lemma:RandomStab-Classical}, this number is $|\stabsclass|\, 2^{-| a\cup  b|}$, i.e.\ $D^TD[ a, b] = |\stabsclass|\, 2^{-| a\cup  b|}$. 
We can write this as $D^TD[ a, b] = |\stabsclass|\, 2^{-| a|-| b|+| a \cap  b|}$. 
Re-scaling the rows and columns leads to the modified matrix $(D^TD)\primed[ a, b] = 2^{| a \cap  b|}$. 
Let us denote $t = \max \{| a| \; : \;  a \in \hat \Chanset \}$. 
Then $(D^TD)\primed$ is a principal sub-matrix of the intersection matrix $M_t$ defined in \eqref{eq:def:M}.
By \Cref{lemma:MPosDef}, $M_t$ is positive-definite. 
As a principal sub-matrix of a positive-definite matrix, $(D^TD)\primed$ is then also positive-definite and, in particular, has full rank. This implies that $D$ has full rank, which finishes the proof of \Cref{thm:main}. 
\end{proof}

%%% ======================================
\subsection{Extension to the quantum case}
\label{sec:quantum}
%%% ======================================
Now will consider the quantum setting, starting with a short overview of stabilizer codes.
We also explain the concept of quantum data-syndrome codes \cite{ashikhmin_quantumdatasyndromecodes,
fujiwara_datasyndromecodes,delfosse_beyondsingleshotfaulttolerance}, following \citet{ashikhmin_quantumdatasyndromecodes}, which allow for a unified treatment of data and measurement errors.
Then, we state and explain our main result \Cref{thm:main-full}, which is the formal version of \Cref{thm:main-informal}.
Finally, we explain how to prove this theorem by extending our arguments from the classical to the quantum case.

%%% ----------------------------
\subsubsection{Preliminaries}
\label{sec:quantum-preliminaries}
%%% ----------------------------
The Pauli group $\mathcal{P}^n$ on $n$ qubits is the group of \emph{Pauli strings} generated by the Pauli operators $\{I, X,Y,Z,\}$ with phases, 
\begin{equation*}
\mathcal{P}^n 
= \Bigl\{\epsilon \bigotimes_{i = 1}^n P_i \mid \epsilon \in \{\pm 1, \pm i\},\; P_{i} \in \{I,X,Y,Z\} \Bigr\} \, .
\end{equation*}
The \emph{weight} $\wt(P)$ of a Pauli string $P=\epsilon \bigotimes_{i = 1}^n P_i$ is the number of non-identity components $P_i$. 
Modding out phases, one obtains the \emph{effective Pauli group}
\begin{equation}
\mathsf P^n = \mathcal{P}^n / \{\pm 1, \pm i\} \, .
\end{equation}
We define the \emph{symplectic inner product} $\inprodp{\cdot}{\cdot}$ on $\mathsf P^n$ by
\begin{equation}
\inprodp{e}{e\primed} = 
\begin{cases} 
      1, & e \text{ and } e\primed \text{ anti-commute in }\mathcal P^n
      \\ 
      0, & e \text{ and } e\primed \text{ commute in }\mathcal P^n 
\end{cases} \, ;
\label{eq:def:SympProduct-Paulis}
\end{equation}
note that this expression is well-defined since the commutation relation does not depend on the choice of representatives in $\mathcal{P}^n$.
We identify $\mathsf P^1$ with $\mathbb{F}_2^{2}$ via the \emph{phase space representation},
\begin{equation}
X \mapsto (1,0),\ Z \mapsto (0,1),\ Y \mapsto (1,1), 
\end{equation}
which extends coordinate-wise to define a group isomorphism $\mathsf{P}^n \to \mathbb{F}_2^{2n}$. Thus, an element of $\mathsf{P}^n$ is represented by $n$ ``$X$-bits'' and $n$ ``$Z$-bits''. Explicitly, $X^{x_1} Z^{z_1} \otimes \dots \otimes X^{x_n}Z^{z_n}$ is mapped to $(x_1,\dots,x_n,\, z_1,\dots,z_n)^T$.
For example
\begin{equation}
      X \otimes I \otimes Z \otimes Y \mapsto (1,0,0,1,\, 0,0,1,1)^{T} \, . 
\end{equation}
This identification will allow for the application of Boolean Fourier analysis to the Pauli group.
We denote the operation of swapping the $X$-bits and $Z$-bits by a bar, $\overline{(x_1,\dots,x_n,\, z_1,\dots,z_n)} = (z_1,\dots,z_n,\, x_1,\dots,x_n)$.
The symplectic inner product \eqref{eq:def:SympProduct-Paulis} then corresponds to
\begin{equation}
\inprodp{ e}{ e\primed} = \overline{ e} \cdot  e\primed \, , 
\end{equation}
where the dot product is evaluated in $\mathbb F_2^{2n}$, i.e.\ modulo $2$.
We define the \emph{Pauli weight} $\wtp(e)$ of $e \in \mathsf P^n \cong \mathbb{F}_2^{2n}$ as the weight of the corresponding Pauli operator, i.e.\
$\wtp(e) = |\{i \in [n] \, : \, e_i \neq 0 \, \vee \, e_{i + n} \neq 0 \}|$.

We describe errors using Pauli channels, i.e.\ quantum channels of the form
\begin{equation}
\rho \mapsto \sum_{e \in \mathsf{P}^n} P(e)\,  e \rho e^\dagger \, ,
\end{equation}
where $P: \mathsf{P}^n \rightarrow [0,1]$ is a normalized probability distribution, the \emph{error distribution} of the channel; note again that this expression is independent of the choices of representatives of $e$ in $\mathcal{P}^n$.

A stabilizer code is defined by a set of commuting Pauli operators $g\su 1 \dots , g\su l \in \mathcal{P}^n$. They generate an abelian subgroup $\stabs \subseteq \mathcal{P}^n$, called stabilizer group, which must fulfill $-1 \not \in \mathscr{S}$. This is the analogue of the classical dual code $\stabsclass$. The codespace is defined as the simultaneous $+1$ eigenspace of the operators in $\stabs$. Standard error correction with a stabilizer code proceeds as follows. In each round all generators are measured. 
If an error $e \in \mathsf{P}^n$ occurred then the vector of all measurement outcomes can be represented by the syndrome $\syn(e)$, defined as
\begin{equation}
\syn(e)_i = \inprodp{g\su i}{e} \quad \forall i = 1,\dots,l \, .
\end{equation}
Using the measured syndrome, and based on information about the error rates, one approximates the most likely logical error and applies it as a correction. 
The outcomes of all stabilizers are determined only by the measurements of the generators (in the case of perfect measurements) via linearity of $\,\inprodp{\cdot}{e}$. 
As in the classical case, we can define the distance of a stabilizer code. However, in contrast to the classical case, there are many errors that act trivially on the code space and do not affect the logical information.
We define the \emph{distance} as the smallest weight of an undetectable error that affects the state of the logical qubit, i.e.\
\begin{equation}
d \coloneqq \min \{\wtp(e) \; : \; e \in \mathsf{P}^n \setminus \stabs, \, \syn(e) = 0 \  \} \, .
\end{equation}
Moreover, we define the \emph{pure distance} of a code as the smallest weight of any undetectable error, 
\begin{equation}
d_p \coloneqq \min \{\wtp(e) \; : \; e \in \mathsf{P}^n \setminus \{I\}, \, \syn(e) = 0 \  \} \, .
\end{equation}
The pure distance and the distance can differ significantly. For example, the distance of the toric code is equal to the lattice size. The pure distance on the other hand is $4$ independent of the lattice size, since the weight of any star or plaquette stabilizer is $4$.

In a practice, the stabilizer measurements themselves could also be faulty.
In this case, one should measure additional elements of $\stabs$ to mitigate the effect of measurement errors. This is captured by the framework of \emph{quantum data-syndrome codes}, which allow for a unified treatment of data and measurement errors \cite{ashikhmin_quantumdatasyndromecodes,fujiwara_datasyndromecodes,delfosse_beyondsingleshotfaulttolerance}.
We now give the basic definitions, following \cite{ashikhmin_quantumdatasyndromecodes}.
In the context of data-syndrome codes, errors are described by a data and a measurement part, as $e = (e_d,e_m)  \in \mathsf{P}^n \times \mathbb{F}_2^{m} \cong \mathbb{F}_2^{2n + m}$.
The swapping of $X$- and $Z$-bits now only applies to the data bits, i.e.\ $\overline{  e} = (\overline{  e_d}, e_m)$ for $e \in \mathbb{F}_2^{2n+m}$.
 We extend the symplectic product to data-syndrome codes by
\begin{equation}
\inprodds{(s_d,s_m)}{(e_d,e_m)} = \inprodp{s_d}{e_d} + s_m \cdot e_m  s_d\cdot\overline e_d + s_m \cdot e_m \, . 
\end{equation} 

A quantum data-syndrome code is defined by a stabilizer code on $n$ physical qubits with generators $g\su 1,\dots,g\su{l} \in \mathsf P^n \cong \mathbb{F}_2^{2n}$ and a classical code that encodes $l$ bits into $m$ bits.
We can always write the generator matrix of the classical code in the systematic form $G_C = \begin{bmatrix}I_l & A \end{bmatrix}$ with $A \in \mathbb{F}_2^{l \times (m-l)}$.
Instead of just the generators $g\su 1,\dots,g\su l$, we measure the stabilizers $f\su 1,\dots,f\su m$ defined by 
\begin{equation}
f\su i = \sum_{j=1}^l G_C[j,i] g\su j \, .
\end{equation}
Note that $f\su i = g\su i$ for $i \leq l$. 
We collect the stabilizers $f\su i$ into a matrix $F = \begin{bsmallmatrix}
f^{(1) \, T} \\ \vdots \\ f^{(m) \, T}
\end{bsmallmatrix}$. 
The measurements of the stabilizers can then be described by the parity check matrix 
\begin{equation}
H_{\mathrm {DS}} = \begin{bmatrix}
F & I_m 
\end{bmatrix} \, ,
\end{equation}
where the identity part describes the effect of measurement errors, as seen by the following discussion.
If an error $e = (e_d,e_m) \in \mathsf{P}^n \times \mathbb{F}_2^m$ occurred, the syndrome $\syn(e)$ is then described by
\begin{equation}
\syn(e)_i = \inprodds{h\su i}{e} \, ,
\label{eq:def:syn-quantum}
\end{equation}
where $h\su i$ is the $i$-th row of $H_{\mathrm {DS}}$. In phase space representation this simply means $\syn(e) = H \overline e = F \overline e_d + e_m$,  i.e.\ the syndrome is the sum of the ideal syndrome and the measurement errors.
We denote the row span of $H_{\mathrm {DS}}$ as $\stabs_{\mathrm {DS}}$, since it is an analogue of the stabilizer group.  

We define the Pauli weight of $e = (e_d,e_m) \in \mathsf P^n \times \mathbb{F}_2^m \cong \mathbb{F}_2^{2n + m}$ as $\wtp(e) = \wtp(e_d) + \wt(e_m)$.
Analogous to stabilizer codes, we can define the \emph{distance} of the data-syndrome code as
\begin{equation*}
d \coloneqq \min \{\wtp(e) : e \in \mathsf P^n \times \mathbb{F}_2^m \setminus \{(h,0) : h \in \stabs \}, \, \syn(e) = 0 \} \,,
\end{equation*}
and the \emph{pure distance} as 
\begin{equation*}
d_p \coloneqq \min \{\wtp(e) : e \in \mathsf P^n \times \mathbb{F}_2^m \setminus \{0\}, \, \syn(e) = 0 \} \, .
\end{equation*}
Naturally, the distance of a data-syndrome code cannot be larger than that of the underlying quantum code. 
Furthermore, it is not hard to see from the definitions that the pure distance of a data-syndrome code is the minimum of its distance
and the pure distance of the underlying stabilizer code.
Thus, the pure distance is limited primarily by the underlying quantum code.

All in all, measurement errors and data errors can be treated in a unified way, analogous to a standard stabilizer code.

%%% =================================
\subsubsection{Identifiability results for quantum codes} 
\label{sec:Quantum-Identifiability}
%%% =================================
Now we extend the identifiability results from classical to quantum data-syndrome codes, by applying analogous arguments to the phase space representation. 
The explicit example of the toric code is discussed in \Cref{sec:ToricCodeExample}. 

We consider a quantum data-syndrome code on $n$ qubits and $m$ measurement bits, and set $N \coloneqq 2n+m$. Similar to the classical case, we consider an error model described by independent channels on some selections of qubits and measurement bits.
We also allow that some of the channels contain only $X$- or $Z$-errors on some qubits.
Thus, in the phase space representation, we consider a set of \emph{channel supports} $\Chanset \subseteq 2^{[N]}$, and for each $\chanset \in \Chanset$ an error distribution $P_\chanset: 2^{[N]} \mapsto [0,1]$ that is only supported on $2^\chanset$, where we again identify binary vectors in $\mathbb{F}_2^N$ with subsets of $[N]$.
Since the total error is a sum of independently occurring errors, the total error distribution is again given by a Boolean convolution, 
\begin{equation}
P = \underset{\chanset \in \Chanset}{\Conv} P_\chanset \, .
\label{eq:P-convolution}
\end{equation}
We denote the sets of bits in phase space representation that only support \emph{detectable} errors as
\begin{equation}
\detectableset = \{\chanset \subseteq [N] \; : \; \syn(e) \neq 0 \; \forall e \subseteq \chanset \}
\label{eq:def:detectableset-quantum}
\end{equation}
and denote $\overline \detectableset = \{ \overline e \; : \; e \in \detectableset\}$.
Using these definitions, we can state our main result.

\begin{theorem} [General identifiability condition]
Consider a quantum data-syndrome code with $n$ qubits and $m$ measurement bits subject to noise described by the channel supports $\Chanset \subseteq 2^{[N]}$, where $N \coloneqq 2n + m$.
Assume that any union of two channel supports only supports detectable errors, i.e.\ for all $\chanset_1,\chanset_2 \in \Chanset$, $\chanset_1 \cup \chanset_2 \in \detectableset$. 
Furthermore assume $P_\chanset(0) > 0.5$ for all $\chanset \in \Chanset$.
Then the total error distribution $P$ is identifiable from the syndrome statistics.
\label{thm:main-full}
\end{theorem}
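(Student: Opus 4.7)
The plan is to mirror the classical proof, with the symplectic inner product providing the bridge from data-syndrome codes to Fourier analysis on $\mathbb{F}_2^N$ for $N=2n+m$. First, I would transport the Fourier setup verbatim: define moments $E(a)\coloneqq\sum_e(-1)^{a\cdot e}P(e)$ and transformed moments $F(a)$ via the inclusion-exclusion formula \eqref{eq:F-E}. Because the total error distribution is still a Boolean convolution \eqref{eq:P-convolution} of the channel distributions $P_\chanset$, the product formula of \Cref{lemma:RescaledMoments-1} applies unchanged; in particular, $F(a)=1$ whenever $a$ is not contained in any channel support, so it suffices to determine $F$ on $\hat\Chanset\coloneqq\{\emptyset\neq a\subseteq\chanset : \chanset\in\Chanset\}$. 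Next, from $\syn(e)_i=\inprodds{h\su i}{e}=h\su i\cdot\overline e=\overline{h\su i}\cdot e$ one reads off that the syndrome statistics determines exactly the moments $E(\overline h)$ for $h\in\stabs_{\mathrm{DS}}$. Combined with \eqref{eq:EFviaGammaHat} this gives the binomial system
\begin{equation}
E(\overline h)=\prod_{a\in\hat\Chanset :\, a\subseteq\overline h}F(a),\qquad h\in\stabs_{\mathrm{DS}}\setminus\{0\},
\end{equation}
with coefficient matrix $D[h,a]=\iverson{a\subseteq\overline h}$.

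The decisive step is to show that $D$ has full rank, which I would do exactly as in the classical case. Writing
\begin{equation}
D^TD[a,b]=\absb{\{h\in\stabs_{\mathrm{DS}}\setminus\{0\}:\overline{a\cup b}\subseteq h\}},
\end{equation}
I would establish (deferring the proof to the appendix, in parallel with \Cref{lemma:RandomStab-Classical}) that whenever $\chanset\in\detectableset$ the rows of $\stabs_{\mathrm{DS}}$ restricted to the column set $\overline\chanset$ are uniformly distributed. This is the natural analogue, since $\chanset\in\detectableset$ is precisely the statement that $H_{\mathrm{DS}}$ has no nonzero kernel element supported on $\overline\chanset$. The hypothesis $\chanset_1\cup\chanset_2\in\detectableset$ together with the downward closure of $\detectableset$ yields $a\cup b\in\detectableset$ for all $a,b\in\hat\Chanset$, so $D^TD[a,b]=|\stabs_{\mathrm{DS}}|\,2^{-|a\cup b|}$. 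Rescaling rows and columns turns $D^TD$ into a principal sub-matrix of the intersection matrix $M_t$ from \eqref{eq:def:M}, which is positive-definite by \Cref{lemma:MPosDef}; hence $D$ has full rank.

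Finally, by the theory of binomial system solving \cite{chen2014_solutionsofbinomials}, the real solutions of the system differ only in the signs of the entries of $\vec F$. The assumption $P_\chanset(0)>\tfrac 12$ gives $E_\chanset(a)\geq 2P_\chanset(0)-1>0$ for every $a$ and every $\chanset$, which by \Cref{lemma:RescaledMoments-1} forces $F(a)>0$ on all of $\hat\Chanset$, so the real solution is unique. The standard moments $E(a)$ are then recovered via \eqref{eq:EFviaGammaHat} and \eqref{eq:MoebiusInv}, and $P$ via the inverse Fourier transform \eqref{eq:FTinv}. The main obstacle, I expect, is the careful bookkeeping of the bar operation so that the orthogonal-array property and the detectability condition line up (equivalently, distinguishing $\detectableset$ from $\overline\detectableset$); apart from that, the proof is a direct transcription of the classical argument.
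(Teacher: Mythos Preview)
Your proposal is correct and follows essentially the same approach as the paper. The only difference is a harmless choice of convention: the paper defines the moments via the symplectic product, $E(a)=\tilde P(\overline a)$, and therefore indexes the columns of $D$ by $\overline{\hat\Chanset}$ and invokes uniformity on elements of $\overline{\detectableset}$; you instead keep the standard Fourier moments $E(a)=\tilde P(a)$ and absorb the bar into the row index, working with $E(\overline h)$ for $h\in\stabs_{\mathrm{DS}}$ and columns in $\hat\Chanset$. Since the bar is a coordinate permutation (hence preserves inclusion and cardinality), the two coefficient matrices coincide after relabeling, and your anticipated ``bookkeeping'' of $\detectableset$ versus $\overline{\detectableset}$ is exactly the point where the two conventions meet.
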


As in the classical case, we can also consider the set of ``independently occurring errors'',
\begin{equation}
\hat \Chanset = \{\emptyset \neq  e \subseteq [N] \; : \; \exists \chanset \in \Chanset \text{ such that }  e \subseteq \chanset\ \} \, ,
\end{equation}
instead of the set of channel supports $\Chanset$.
As explained \Cref{sec:Appendix-ChannelSupports},
 an equivalent condition to $\chanset_1 \cup \chanset_2 \in \detectableset$ for all $\chanset_1, \chanset_2 \in \detectableset$ in terms of these independent errors is that $\syn(e_1) \neq 0$ and $\syn(e_1 + e_2) \neq 0$ for all $e_1,e_2 \in \hat \Chanset$. Thus, we require that independently occurring errors have different syndromes. As also discussed in the classical case, this does not preclude that different errors with the same syndrome occur frequently, but they must arise as a combination of independent errors.

The most important implication of \Cref{thm:main-full} is the following.
\begin{corollary}[Pure distance and identifiability]
Consider a quantum data-syndrome code of pure distance $d_p$ on $n$ qubits and $m$ measurement bits, subject to noise described by the channel supports $\Chanset = \{ \chanset \subseteq [2n + m] \, : \, \wtp(\chanset) \leq t \}$. Furthermore assume $P_\chanset(0) > 0.5$ for all $\chanset \in \Chanset$. Then $P$ is identifiable from the syndrome statistics if $t \leq \lfloor \frac{d_p -1}{2} \rfloor$.
\begin{proof}
If the quantum data-syndrome code has pure distance $d_p$ then $\detectableset$ contains any $\chanset \subseteq [N]$ with Pauli weight at most $t$ (when viewed as a binary vector).
\end{proof}

\end{corollary}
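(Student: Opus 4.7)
The plan is to derive this corollary as a direct application of \Cref{thm:main-full}. The hypothesis $P_\chanset(0) > 0.5$ is already assumed, so the only nontrivial task is verifying the ``pairwise-union detectability'' condition: for all $\chanset_1,\chanset_2 \in \Chanset$, we must have $\chanset_1 \cup \chanset_2 \in \detectableset$, with $\detectableset$ defined as in \eqref{eq:def:detectableset-quantum}.

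First I would unpack what it means for a subset $\chanset \subseteq [N]$ to lie in $\Chanset$ under the given choice: namely, $\wtp(\chanset) \leq t$, where the Pauli weight of a subset is taken by identifying the subset with a binary vector in $\mathbb{F}_2^{2n+m}$. The Pauli weight is subadditive under union (since a qubit position is ``active'' in $\chanset_1\cup\chanset_2$ iff it is active in at least one of $\chanset_1,\chanset_2$, and similarly on the measurement bits). Hence $\wtp(\chanset_1\cup\chanset_2) \leq \wtp(\chanset_1) + \wtp(\chanset_2) \leq 2t$, and by the hypothesis $t \leq \lfloor (d_p-1)/2 \rfloor$ this gives $\wtp(\chanset_1\cup\chanset_2) \leq d_p - 1$.

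Next, I would use the definition of pure distance to conclude detectability. Any nonzero $e \subseteq \chanset_1\cup\chanset_2$ satisfies $\wtp(e) \leq \wtp(\chanset_1\cup\chanset_2) \leq d_p -1 < d_p$. Since $d_p$ is the minimum Pauli weight of any nonzero undetectable error in the data-syndrome code, such an $e$ must have $\syn(e) \neq 0$. Therefore $\chanset_1\cup\chanset_2 \in \detectableset$, completing the verification. Applying \Cref{thm:main-full} then yields identifiability of $P$ from the syndrome statistics.

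There is no substantial obstacle; the whole argument is a one-line weight count plus a definition-unpacking. The only subtle point worth being explicit about is the subadditivity of $\wtp$ under union in the phase-space encoding (as opposed to under sum in $\mathbb{F}_2^{2n+m}$, where cancellations could in principle reduce weight); under union, no such cancellation occurs, so the bound $\wtp(\chanset_1\cup\chanset_2) \leq 2t$ holds unconditionally.
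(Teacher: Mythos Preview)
Your proof is correct and follows the same approach as the paper: verify the pairwise-union detectability hypothesis of \Cref{thm:main-full} by a Pauli-weight count. The paper's own proof is a one-liner that compresses exactly your argument (union has Pauli weight at most $2t \leq d_p-1$, hence lies in $\detectableset$); you have simply made the subadditivity step and the invocation of the pure-distance definition explicit.
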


In other words, with a code of pure distance $d_p$ we can estimate Pauli noise that is correlated across $\lfloor \frac{d_p -1}{2} \rfloor$ combined qubits and measurement bits. This proves the informal \Cref{thm:main-informal}. One could also make stronger independence assumptions, for example that data and measurement errors occur independently of each other. In this case one can consider the pure distance $d_Q$ of the underlying quantum code and the distance $d_C$ of the measurement code independently and can estimate correlations across at least $d_Q$ data qubits and $d_C$ measurement bits. Similarly, for CSS-codes, one can consider a separate $X$- and $Z$-distance if one assumes that $X$- and $Z$-errors occur independently.

Let us now discuss the proof of \Cref{thm:main-full}, which consists of carefully applying the arguments from the classical case to the phase space representation. In this framework, the main difference to the classical case is that the moments must be defined using the symplectic product instead of the normal dot product if we want them to match the measured expectation values. Thus, for any subset $a \subseteq [N]$, we define
\begin{align}
  E( a) &\coloneqq \sum_{ e \in \mathbb{F}_2^N} (-1)^{\inprodds{ a}{ e}} P( e), 
  \\
  E_\chanset ( a) &\coloneqq \sum_{ e \in \mathbb{F}_2^N} (-1)^{\inprodds{ a}{ e}} P_\chanset( e)   \, .
\end{align}
For $ s \in \stabssymp_{\mathrm {DS}}$, $E(s)$ is again exactly the expectation value of the measurement of the stabilizer in repeated rounds of error correction. However, the relation between moments and Fourier coefficients now contains a ``twist'', i.e.\
\begin{equation*}
E( a) = \sum_{ e \in \mathbb{F}_2^N} (-1)^{\inprodds{ a}{ e}} P( e) = \sum_{ e \in \mathbb{F}_2^N} (-1)^{\overline{ a}\cdot  e} P( e) = \fourier P(\overline{ a}) \, .
\end{equation*}
A similar connection between measurements and Fourier coefficients has recently been pointed out by \citet{flammia2021_paulilearningpopulationrecovery}.

Because $P_\chanset$ is only supported on $2^\chanset$, $\fourier P_{\chanset}( a) = \fourier P_{\chanset}( a \cap \chanset)$ and thus $E_{\chanset}( a) = E_{\chanset}( a \cap \overline \chanset)$. It follows that
\begin{equation}
E( a) = \prod_{\chanset \in \Chanset} E_\chanset( a \cap \overline \chanset) \, .
\label{eq:E-gamma-quantum}
\end{equation}
We can define the transformed moments $F( b)$ via the inclusion-exclusion transform \eqref{eq:F-E} as in the classical case and obtain
\begin{equation}
E( a) = \prod_{ b \subseteq a} F( b) \quad \forall a \subseteq [N] \, .
\label{eq:EfromF-quantum}
\end{equation}
By replacing $\chanset$ with $\overline \chanset$ in \Cref{lemma:RescaledMoments-1}, we obtain $F( a) = 1$ if there is no $\chanset \in \Chanset$ such that $ a \subseteq \overline{\chanset}$. 
It thus suffices to know the transformed moments $F( b)$ for $ b \in \overline{\hat \Chanset}$, where
\begin{equation*}
\overline{\hat \Chanset} = \{ \emptyset \neq  A \subseteq [N] \; : \; \exists \chanset \in \Chanset \text{ such that }  A \subseteq \overline \chanset \} \, .
\end{equation*}
The problem of learning the error rates has thus been reduced to the problem of learning the transformed moments from the measured expectation values. Explicitly, we need to solve the following equation system, which is analogous to the classical case.
\begin{equation}\label{eq:BinomialSystemE-F-Quantum}
\begin{aligned}
&\text{Given } 
(E( s))_{ s \in \stabs_{\mathrm {DS}}} 
\text{ find } 
(F( b))_{ b\in \overline{\hat \Chanset}} 
\text{ satisfying }
\\[.2em]
&E( s) = \prod_{ b \in \overline{\hat \Chanset} :  b \subseteq  s} F( b) 
\qquad 
\forall  s \in \stabs_{\mathrm {DS}} \setminus \{0\} \, .
\end{aligned}
\end{equation} 
This system can be described by the \emph{coefficient matrix} $D$, whose rows are labeled by elements of $\stabs_{\mathrm {DS}}$ and whose columns are labeled by elements of $\overline{\hat \Chanset}$, with entries
\begin{equation}
D[ s, a] = \begin{cases}
		 1 & \text{ if }  a \subseteq  s \\
		 0 & \text{ otherwise }
		 \end{cases} \, .
\label{eq:def:D-quantum}
\end{equation} 

We are now in a position to finish the proof of \Cref{thm:main-full}.
\begin{proof}[Proof of \Cref{thm:main-full}]
First we show that the coefficient matrix $D$ defined in \eqref{eq:def:D-quantum} has full rank.
Note that by the definition of $\detectableset$, if $\chanset \in \detectableset$, then also $a \in \detectableset$ for any $a \subseteq \chanset$.
The assumption that $\chanset_1 \cup \chanset_2 \in \detectableset$ for all $\chanset_1, \chanset_2 \in \Chanset$ thus implies that $ a \cup  b \in \overline \detectableset$ for all $ a,  b \in \overline{\hat \Chanset}$. It is proven in
\Cref{sec:Appendix-OrthogonalArrays} (\Cref{lemma:RandomStab-DataSyndrome})
that the elements of $\stabssymp_{\mathrm {DS}}$ look locally uniformly random on any element of $\overline \detectableset$. 
The same arguments as in the proof of \Cref{thm:main} thus imply that $D^TD$ can be re-scaled to be a principal sub-matrix of the matrix $M_t$ from \eqref{eq:def:M} (for $t = \max \{|a| \, : \, a \in \overline{\hat \Chanset}\}$). Since $M_t$ is positive-definite by \Cref{lemma:MPosDef}, we conclude that
$D^TD$ and thus $D$ is of full rank.
By \cite{chen2014_solutionsofbinomials}[Proposition 2], the re-scaled moments $F(a)$, and thus also the standard moments $E(a)$, can be estimated up to their sign.
If we assume all moments $E(a)$ to be positive, then the estimate is unique. 
A sufficient condition for all moments $E(a)$ to be positive is that $P_\chanset(0) > 0.5$ for all $\chanset \in \Chanset$.
\end{proof}

Similar to the classical case, the appearance of multiple solutions, differing by the signs of some moments, reflects the symmetries of the problem.
This becomes especially apparent when considering the simple case of a stabilizer code with single qubit noise and no measurement errors.
In this case, errors on each qubit are independent.
Thus, for a stabilizer $S = S_1 \otimes \dots \otimes S_n \in \mathcal{P}^n$, %,viewed as a Pauli string,
we have  $E(S) = \prod_{i \in [n]} E(S_i)$, which is a simpler form of the equation system \eqref{eq:BinomialSystemE-F-Quantum}.
Consider a representative $L \in \mathcal P^n$ of a logical operator (including stabilizers).
Denote as
\begin{equation}
\mathcal{A}(L) \coloneqq \{A \in \mathcal{P}^n \, : \, \wtp(A) = 1,\, \inprodp{A}{L} = 1 \}
\end{equation}
the set of single qubit Pauli operators that anti-commute with $L$.
Since $L$ commutes with all stabilizers $S$, each equation $E(S) = \prod_{i \in [n]} E(S_i)$ contains, on the right-hand side, an even number of terms $E(S_i)$ such that $S_i \in \mathcal{A}(L)$.
Thus, flipping the sign of $E(S_i)$ for all $S_i \in \mathcal{A}(L)$ does not change the measured syndrome statistics.
We see that there is one symmetry for each representative of a logical operator.

In summary, the estimation problem is best phrased in terms of moments instead of error rates. From this perspective, estimating the error distribution boils down to solving a polynomial system \eqref{eq:BinomialSystemE-F-Quantum}. If the correlations in the error distribution are small enough, this system has a finite number of discrete solutions. These solutions are described by symmetries related to the logical operators of the code. Under the additional mild assumption that the error rates are smaller than $\frac 12$ the solution is unique.

\subsection{Practical estimation}
\label{sec:PracticalEstimation}
While the main result of this paper lies in establishing identifiability conditions in principle, our proofs also suggest a practical method to actually perform the estimation. Here, we briefly sketch this method and heuristically comment on its sample complexity. A detailed analysis is ongoing work.

We suggest a method of moments estimator, i.e.\ to 
% insert
use empirical expectation values $\hat E(s)$ for $E(s)$ in ~\eqref{eq:BinomialSystemE-F-Quantum}. 
The basic steps (for the quantum case, but the classical case is exactly analogous) are the following:
\begin{enumerate}
\item Perform $m$ syndrome measurements, and use them to compute the empirical expectation value $\hat E(s)$ for each stabilizer $s$.
\item Insert these empirical expectation values into \eqref{eq:BinomialSystemE-F-Quantum} and solve the resulting binomial system to obtain an estimate $(\hat F(b))_{b\in \overline{\hat \Chanset}}$ of the transformed moments.
\item Insert this estimate into \eqref{eq:EfromF-quantum} to obtain an estimate of the moments.
\item Perform the inverse Fourier transform \eqref{eq:FTinv} to obtain an estimate $\hat P$ of the Pauli error rates.
\end{enumerate}

The relevant binomial system can be solved e.g. using the methods described by \citet{chen2014_solutionsofbinomials}.
It will generally be over-determined.
In principle, one can select a subset of equations such that the system is exactly determined, and then solve it analytically, resulting in a closed form expression for the estimate of the error rates in terms of the empirically measured expectation values.
This is illustrated by the example of the toric code (\Cref{sec:ToricCodeExample}), for which our method reproduces the estimator suggested by \citet{obrien_adaptiveweightestimator}. 
This estimator has also been applied by\citet{varbanov_leakagedetectionhmm}. 
However, since the empirical expectations contain a certain error, this solution might not always yield a proper probability distribution, if the number of samples used is low. 
Furthermore, for an over-determined system, selecting a subset of equations removes some of the measured information, which can reduce the accuracy of the estimate. In such cases it might be preferable to instead use a least-squares solver for the over-determined system.

Since our algorithm is designed for an on-line setting where only syndrome information is available, the sample complexity must necessarily be worse than that of algorithms using arbitrary measurements, such as \cite{flammia2020_efficientestimationofpaulichannels} and \cite{flammia2021_paulilearningpopulationrecovery}.
Each syndrome measurement contains a measurement of each stabilizer generator, therefore the expectation values of all stabilizers can be estimated simultaneously and no large measurement overhead is needed. 
However, the estimation error has to be propagated through the binomial system, the inclusion-exclusion transform and the inverse Fourier transform.
Heuristically, we expect a sample complexity similar to that of factor graph learning \cite{abbeel_learningfactorgraphs}, which was applied to Pauli channel estimation by \citet[Result 3]{flammia2020_efficientestimationofpaulichannels}. However, there will be a complicating factor accounting for the conditioning of the binomial system described by the coefficient matrix $D$.
%Heuristically, we expect a sample complexity similar to that of \cite{flammia2021_paulilearningpopulationrecovery}, however with two complicating factors.
%One containing a lower bound on the error rates, since the algorithm will be ill-conditioned for rates close to $0.5$, and another accounting for the conditioning of the binomial system described by the coefficient matrix $D$.
Compared to other algorithms designed for the on-line setting \cite{fujiwara_instantaneouschannelestimationcss,
fowler_scalableextractionoferrormodelsfromqec,
huo_2017,
wootton_qiskitbenchmarking,
florjanczykbrun_insituadaptiveencoding,
obrien_adaptiveweightestimator,
combes_insitucharofdevice, wagner2021_optimalnoiseestimationfromsyndromes}, 
% however,
we expect that our method compares favorably, since these algorithms are either designed for a limited class of codes and Pauli noise models, deal with likelihood functions based on syndrome probabilities, not moments, which quickly become intractable, or only work in the limit of vanishing error rates.

In practice, it will not be necessary to estimate the expectation values of all stabilizers, but only a limited subset will be sufficient, depending on how many equation of the binomial system are retained. We expect that it is sufficient to only consider a selection of neighboring stabilizers for each qubit. For example, in the toric code example \Cref{sec:ToricCodeExample}, only three expectation values are needed per qubit, independent of the system size.

Finally, in case that the Pauli noise model is not strictly identifiable, we expect that the estimate will combine the rates of indistinguishable errors, but still give a good estimate of these total error rates.

\section{Discussion}
In this work, we considered the estimation of Pauli channels just from the measurements done anyway during the decoding of a quantum code.
We established a general condition for the feasibility of this estimation and explained the relation to the pure distance of the code.
Essentially, the estimation is possible as long as the noise has no  correlations which exceed the detection capabilities of the code.
This result does not rely on the limit of vanishing error rates, and applies even if high weight errors occur frequently, as long as these high weight errors arise as a combination of independent lower weight errors.
Our results cover general stabilizer codes and quantum data-syndrome codes \cite{ashikhmin_quantumdatasyndromecodes,fujiwara_datasyndromecodes, delfosse_beyondsingleshotfaulttolerance} and also take into account measurement errors.
The previously proposed \emph{adaptive weight estimator} \cite{obrien_adaptiveweightestimator} can be seen as a special case of our general results, since it solves a specific instance of the system \eqref{eq:BinomialSystemE-F} for single qubit noise and a limited class of codes, those that admit a minimum weight matching decoder.

An interesting new problem is an analogue of our results on the logical level.
Since we cannot identify the rates of undetectable errors, there is no full identifiability for correlations larger than the pure distance.
However, we expect that the logical channel can still be identified, as long as there are no correlations larger than the actual distance of the code.
This would practically allow for an analogue of \Cref{thm:main-informal}, using the distance instead of the pure distance. 
Furthermore, our proof naturally suggests a method of moments estimator for the Pauli error rates by inserting the empirical moments in \eqref{eq:BinomialSystemE-F}. A detailed analysis of this method, including rigorous performance guarantees as well as a better assessment of its sample complexity, is ongoing work.
%Furthermore, our proof naturally suggests a method of moments estimator for the Pauli error rates.
%One first performs $m$ syndrome measurements.
%From this set of measurements, one can compute an empirical estimate $\hat E(s)$ of the stabilizer expectations for all $s \in \mathcal{S}$.
%Inserting this into \eqref{eq:BinomialSystemE-F-Quantum} and solving the resulting binomial system yields an estimate  of the transformed moments.
%This can be inserted into \eqref{eq:EfromF-quantum} to obtain an estimate of the ordinary moments, and finally an inverse Fourier transform can be performed to obtain an estimate of the error rates.
%Since each syndrome measurement contains 
%Establishing a practical method based on this idea, and proving performance guarantees for it, is ongoing work.

\section*{Acknowledgments}
We thank Daniel Miller for discussions on algebraic geometry.
This work was funded by the Deutsche Forschungsgemeinschaft (DFG, German Research Foundation) under Germany's Excellence Strategy – Cluster of Excellence Matter and Light for Quantum Computing (ML4Q) EXC 2004/1 – 390534769.
The work of M.K.\ is also supported by the DFG via the Emmy Noether program (grant number 441423094) and by the German Federal Ministry of Education and Research (BMBF) within the funding program ``quantum technologies -- from basic research to market'' in the joint project MIQRO (grant number 13N15522).

%\section*{Author contributions}
%
%T.W. and M.K. conceived the project idea.
%T.W. developed the theoretical results and proofs.
%T.W. and M.K. wrote the original draft of the manuscript.
%All authors reviewed and discussed the proofs and results and contributed to the development of the final manuscript.
%M.K. supervised the project.

\section*{Competing interests}

The authors declare no competing interest.

%%% ============================================================    
\appendix
%Use different subsection labels 
\section*{Appendix}
\addcontentsline{toc}{section}{Appendix}
\renewcommand{\thesubsection}{\Alph{section}.\arabic{subsection}}
%%%=============================================================
In this appendix, we provide some auxiliary statements in order to keep this work largely self-contained. 
We also provide the proof of \Cref{lemma:MPosDef} which was omitted in the main text.
Finally, we explain how the toric code example in \Cref{sec:ToricCodeExample} relates to the results derived by \citet{obrien_adaptiveweightestimator}.

\section{Proof of \texorpdfstring{\Cref{lemma:RescaledMoments-1}}{Lemma 2}}
\label{sec:Appendix-ProofOfLemma:RescaledMoments-1}
Inserting equation \eqref{eq:E-gamma} into equation \eqref{eq:F-E} yields
\begin{align}
F( a) &= \prod_{ c \subseteq  a} (\prod_{\chanset \in \Chanset} E_\chanset( c \cap \chanset))^{\mu( a, c)} = \prod_{\chanset \in \Chanset}\prod_{ c \subseteq  a} E_\chanset( c \cap \chanset)^{\mu( a, c)} \nonumber
\\
&=    \prod_{\chanset \in \Chanset}
      \prod_{ b \subseteq \chanset} 
      \prod_{ c \subseteq  a :  c \cap \chanset =  b} 
      E_\chanset( b)^{\mu( a, c)} = 
\prod_{\chanset \in \Chanset}\prod_{ b \subseteq \chanset} E_\chanset( b)^{ \sum_{ c \subseteq  a:  c \cap \chanset =  b} \mu( a, c)} \, .
\label{eq:F(A)}
\end{align}
Writing $ c=  b \cup  r$ with $ r \cap \chanset  = \emptyset$, and using the fact that the sum is empty if $ b \not \subseteq  a$, we obtain
\begin{align*}
\sum_{ c \subseteq  a :  c \cap \chanset =  b} \mu( a, c) &= (-1)^{| a|}\sum_{ c \subseteq  a :  c \cap \chanset =  b} (-1)^{| c|} \\
&=(-1)^{| a|}(-1)^{| b|} \,\iverson{ b \subseteq  a} \sum_{ c \subseteq  a \setminus \chanset} (-1)^{| c|} = \mu( a, b)\, \iverson{ b \subseteq  a} \sum_{0 \leq a \leq | a\setminus \chanset|} \binom{| a\setminus \chanset|}{a}(-1)^a \\
&= \mu( a, b) \,\iverson{ b \subseteq  a} \,\iverson{ a \setminus \chanset = \emptyset} = \mu( a, b) \,\iverson{ b\subseteq  a} \,\iverson{ a \subseteq  \chanset} \, .
\end{align*}
Substituting this identity into \eqref{eq:F(A)} yields the desired expression. 
% \end{proof}

\section{Equivalent condition on channel supports}
\label{sec:Appendix-ChannelSupports}
Below Theorems~\ref{thm:main} and \ref{thm:main-full}, we stated two equivalent characterizations of detectable errors in terms of channel supports. 
The following lemma formalizes this statement. 
The proof is the same in both the classical and the quantum case.

\begin{lemma}
Let $\detectableset$ be as defined in equation \eqref{eq:def:Delta}, $\Chanset \subseteq 2^{[n]}$ and let $\hat \Chanset = \{ e \subseteq \chanset \, : \, \chanset \in \Chanset \}$. Then $\chanset_1 \cup \chanset_2 \in \detectableset$ for all $\chanset_1, \chanset_2 \in \Chanset$ if and only if $\syn(e_1) \neq 0$ and $\syn(e_1 + e_2) \neq 0$ for all $e_1,e_2 \in \hat \Chanset$.
\end{lemma}
\begin{proof}
Note that by the definition of $\detectableset$, 
if $\chanset \in \detectableset$ then $e \in \detectableset$ for all $e \subseteq \chanset$. 
Thus, $\gamma_1 \cup \gamma_2 \in \detectableset$ implies that $e_1,e_2 \in \detectableset$ for all $e_1 \subseteq \chanset_1$ and $e_2 \subseteq \chanset_2$ and, in particular, $\syn(e_1) \neq 0$. 
Furthermore, for any such $e_1,e_2$, we have that $e_1 + e_2 \subseteq \chanset_1 \cup \chanset_2$, since the addition as binary vectors corresponds to the symmetric difference of the indicator sets. 
This implies, in particular, $\syn(e_1 + e_2) \neq 0$. 

The other direction of the equivalence is proven similarly by noting that any subset of $e \subseteq \chanset_1 \cup \chanset_2$ can be written as $e_1 + e_2$ for appropriate choices of $e_1 \subseteq \chanset_1$ and $e_2 \subseteq \chanset_2$, where it could be that either $e_1$ or $e_2$ is the empty set.
\end{proof}

\section{Orthogonal array properties}
\label{sec:Appendix-OrthogonalArrays}
In this section, 
we prove local randomness properties of stabilizer codes. 
Consider a quantum data-syndrome code with stabilizers $\stabs_\mathrm{DS}$ on $n$ qubits and $m$ measurement bits. 
Set $N \coloneqq 2n + m$. 
By $ T$ we denote the $|\stabs_\mathrm{DS}|\times N$-matrix formed by all elements in the span of the rows of the parity check matrix $ H_\mathrm{DS}$, i.e.\ the rows of $ T$ are exactly the elements of $\stabs_\mathrm{DS}$. 
Then, the following local randomness property holds.
\begin{lemma}
Let $\chanset \in \detectableset$. In the restriction $ T^{|\overline \chanset}$ of $ T$ to columns in $\chanset$, every bit-string appears equally often as a row.
\label{lemma:RandomStab-DataSyndrome}
\end{lemma}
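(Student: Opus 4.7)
The plan is to exploit $\mathbb F_2$-linearity of the restriction map in order to reduce the statement to a surjectivity claim, and then derive that surjectivity by contrapositive from the definition of $\detectableset$. Define the restriction map $\phi: \stabs_\mathrm{DS} \to \mathbb F_2^{|\overline \chanset|}$ by $\phi(s) = s^{|\overline \chanset}$. Since $\stabs_\mathrm{DS}$ is a linear subspace of $\mathbb F_2^N$, the map $\phi$ is $\mathbb F_2$-linear, so every fibre is a coset of $\ker \phi$ and hence has the same cardinality. It therefore suffices to show that $\phi$ is surjective; each bit-string will then appear exactly $|\stabs_\mathrm{DS}|/2^{|\overline \chanset|}$ times as a row of $T^{|\overline \chanset}$.

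To prove surjectivity I would argue by contrapositive. If $\mathrm{im}(\phi)$ is a proper subspace of $\mathbb F_2^{|\overline \chanset|}$, then there exists a nonzero $v \in \mathbb F_2^{|\overline \chanset|}$ with $v \cdot \phi(s) = 0$ for every $s \in \stabs_\mathrm{DS}$. Padding $v$ with zeros on the complement of $\overline \chanset$ yields a nonzero $\tilde v \in \mathbb F_2^N$ that is supported on $\overline \chanset$ and satisfies $s \cdot \tilde v = 0$ for all $s \in \stabs_\mathrm{DS}$.

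Now set $e \coloneqq \overline{\tilde v}$. Because the bar is an involution on coordinates that maps $\overline \chanset$ back to $\chanset$, the support of $e$ is contained in $\chanset$, and $e \neq 0$ since $\tilde v \neq 0$. Using that the bar is a coordinate permutation (so $a \cdot \overline b = \overline a \cdot b$ and $\overline{\overline a} = a$), the syndrome components compute as
\begin{equation*}
\syn(e)_i = \inprodds{h\su i}{e} = h\su i \cdot \overline e = h\su i \cdot \tilde v = 0
\end{equation*}
for every row $h\su i$ of $H_\mathrm{DS}$. Hence $e$ is a nonzero undetectable error supported on $\chanset$, contradicting $\chanset \in \detectableset$.

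The only delicate point, and really the only obstacle, is keeping careful track of the bar operation and the twist in the symplectic inner product; once the identities $\inprodds{s}{e} = s \cdot \overline e = \overline s \cdot e$ and $\overline{\overline x} = x$ are in hand, the argument is the standard orthogonal-complement calculation adapted from the classical setting. Equivalently, one could apply \Cref{lemma:RandomStab-Classical} directly to the linear code whose parity check matrix is obtained by barring every row of $H_\mathrm{DS}$, since $\chanset \in \detectableset$ translates precisely into $\overline \chanset$ being detectable for that classical code.
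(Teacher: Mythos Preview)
Your proof is correct and is essentially the same argument as the paper's: both reduce the uniform-fibre claim to the fact that no nonzero $e$ supported on $\chanset$ has zero syndrome. The paper phrases this as ``the columns of $H_\mathrm{DS}^{|\overline\chanset}$ are linearly independent, hence the map $\zeta \mapsto \zeta^T H_\mathrm{DS}^{|\overline\chanset}$ has fibres of size $2^{l-|\overline\chanset|}$'', while you phrase the dual statement as surjectivity of the restriction map and obtain the hypothetical undetectable error via the orthogonal complement; the content is identical.
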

We adapt the proof of \cite[Theorem 3.29]{hedayat2012_orthogonalarrays} to our situation. 
\begin{proof}
We denote the restriction of a matrix $ M$ to the columns indexed by $\chanset \subseteq [n]$ by $ M^{|\chanset}$.
By the definition \eqref{eq:def:syn-quantum} of a syndrome and the definition ( equation \eqref{eq:def:detectableset-quantum}) of $\detectableset$ it follows that for any row $h$ of $H_\mathrm{DS}$ and any $e \subseteq \chanset$, $\inprodds{h_i}{e} = 0$ if and only if $e = 0$ (as a binary vector). 
Since $\inprodds{h_i}{e} = \overline h_i \cdot e$, we conclude that the columns of $ H_\mathrm{DS}^{|\overline \chanset}$ are linearly independent (over $\mathbb{F}_2$), i.e.\ $ H_\mathrm{DS}^{|\overline \chanset}$ is of rank $|\overline \chanset|$.
The rows of $ T^{|\overline \chanset}$ are, by definition of $ T$, the vectors $ \zeta^T  H_\mathrm{DS}^{|\overline \chanset}$ for $ \zeta \in \mathbb{F}_2^l$, where $l$ is the number of rows of $ H$. 
The number of times a bit-string $ z$ appears as a row in $ T$ is thus equal to the number of vectors $ \zeta \in \mathbb{F}_2^l$ with $ \zeta^T  H^{|\overline \chanset} =  z$. Since $ H^{|\overline \chanset}$ has rank $|\overline \chanset|$, this number is $2^{l - |\overline \chanset|}$ independent of~$z$.
\end{proof}

\Cref{lemma:RandomStab-Classical} for classical codes is a direct corollary, which follows by only considering the classical (measurement) part of the data-syndrome code.

%%% ----------------------------
\section{Analysis of the intersection matrix}
%%% ----------------------------
In this appendix, 
we first summarize the definition and the most important properties of the Schur complement (see e.g.\ the book by Horn and Johnson \cite{horn2012_matrixanalysis}).
On this basis we then prove \Cref{lemma:MPosDef} on the intersection matrix.
% These will be important for the proofs in \Cref{sec:RankM}. 

%%% ======================================
\subsection{Iterated Schur complements}
%%% ======================================
\label{sec:SchurComplements}
For a block matrix 
\begin{equation}
M_2 = \begin{pmatrix}
M_{11} & M_{12} \\
M_{21} & M_{22}
\end{pmatrix}, 
\end{equation}
with $M_{11}$ being invertible, the \emph{Schur complement} of $M_{11}$ in $M_2$ is defined as
\begin{equation}\label{eq:SchurComplement}
M_2/M_{11} = M_{22} - M_{21}M_{11}^{-1}M_{12} \, .
\end{equation}
If $M_{11}$ is invertible, then $M_2$ has full rank if and only if $M_2/M_{11}$ has full rank. 
Furthermore, then $M_2$ is positive definite if and only if $M_{11}$ and $M_2/M_{11}$ are positive definite.
For the extended block matrix
\begin{equation}
M_3 = \begin{pmatrix}
M_{11} & M_{12} & M_{13} \\
M_{21} & M_{22} & M_{23} \\
M_{31} & M_{32} & M_{33}
\end{pmatrix}  
\end{equation}
the upper left block of $M_3/M_{11}$ is given by $M_2/M_{11}$ and the following \emph{quotient property of Schur complements} holds:
\begin{equation}
M_3 / M_2 = (M_3/ M_{11}) / (M_2 / M_{11})\, .
\label{eq:SchurQuotientProperty}
\end{equation}

%%% ----------------------------
\subsection{The intersection matrix and proof of \texorpdfstring{\Cref{lemma:MPosDef}}{the lemma}}
%%% ----------------------------
\label{sec:RankM}
In the  \Cref{sec:OrthogonalArry-maintext}, we have shown that the coefficient matrix $D$ has full rank, which implies identifiability of the error distribution.
The only thing missing is the proof of \Cref{lemma:MPosDef}, which was omitted.
We will now supply this proof, i.e.\ we show that the intersection matrix $M_t$ defined in \eqref{eq:def:M} is positive-definite for any $t \in \mathbb{N}$.
 
Ordering the indexing sets $ a, b \in \Chanset_{\leq t}$ by their size, for $r \leq t$, we can view $M_r$ as a sub-matrix in the upper left corner of $M_t$.
$M_t$ is an $(\alpha_t -1)\times (\alpha_t -1)$ matrix, where we define $\alpha_t = \sum_{k = 0}^t \binom{n}{k}$. 
Note that in the case $t = 1$, $M$ takes a very simple form. Since two single-element sets are either equal or disjoint, $M_1$ has entries $2$ on the diagonal and $1$ elsewhere, i.e.\
\begin{equation}
M_1 = I + \allones{1} \, ,
\end{equation}
where $\mathbf{1}_t$ denotes the all ones vector of dimension $\binom{n}{t}$. 
This matrix is clearly positive-definite. 
In general, $M_t$ does not admit such a simple expression. 
However, we will show that certain \emph{Schur complements} of $M_t$ can always be expressed in this simple form (see \Cref{sec:SchurComplements}
 for the required background on Schur complements).

\begin{lemma}
\label{lemma:Schur-Complement-M-Simple}
Let $M_t$ be as in \eqref{eq:def:M}. Then the Schur complement $M_t/M_{t-1}$ of $M_{t-1}$ in $M_t$ is well-defined and given by
\begin{equation}
M_t/M_{t-1} = I + \frac{\allones{t}}{\alpha_{t-1}} \, ,
\end{equation}
where $\mathbf{1}_t$ is the all ones vector of dimension $\binom{n}{t}$ and $\alpha_t = \sum_{k = 0}^t \binom{n}{k}$.
For $t = 1$ we set $M_1 / M_0 \coloneqq M_1$. 
\end{lemma}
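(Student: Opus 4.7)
My plan is to realise $M_t$ as a Gram matrix, $M_t = X_t X_t^T$, where $X_t$ is the $(\alpha_t - 1) \times \alpha_t$ incidence matrix with $X_t[a, S] = \iverson{S \subseteq a}$, whose rows are indexed by $a \in \Chanset_{\leq t}$ and whose columns are indexed by all subsets $S \subseteq [n]$ with $|S| \leq t$ (including $\emptyset$). The identity $2^{|a \cap b|} = |\{S \subseteq a \cap b\}|$, together with the observation that every such $S$ automatically has $|S| \leq t$ when $a, b \in \Chanset_{\leq t}$, gives $M_t = X_t X_t^T$ immediately.

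The first step is a block decomposition of $X_t$ that separates rows and columns of maximal size $t$ from the rest. Since $S \not\subseteq a$ whenever $|S| > |a|$, and since $S \subseteq a$ with $|S| = |a| = t$ forces $S = a$, one obtains
\begin{equation}
X_t = \begin{pmatrix} X_{t-1} & 0 \\ Y & I \end{pmatrix},
\end{equation}
where $Y[b, S] = \iverson{S \subseteq b}$ for $|b| = t$ and $|S| \leq t-1$. Multiplying gives $M_t = \begin{pmatrix} X_{t-1}X_{t-1}^T & X_{t-1}Y^T \\ YX_{t-1}^T & YY^T + I \end{pmatrix}$, so that $M_{t-1} = X_{t-1}X_{t-1}^T$ is the top-left block. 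At this point I would note that $X_{t-1}$ has full row rank $\alpha_{t-1} - 1$: its square submatrix obtained by discarding the $\emptyset$-column is block-lower-triangular in the size ordering, with identity diagonal blocks. This guarantees that $M_{t-1}$ is invertible and that $M_t/M_{t-1}$ is well-defined.

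Substituting into the definition of the Schur complement and using that $P_{t-1} \coloneqq X_{t-1}^T (X_{t-1}X_{t-1}^T)^{-1}X_{t-1}$ is the orthogonal projector onto the row span of $X_{t-1}$ yields the central identity
\begin{equation}
M_t/M_{t-1} = I + Y(I - P_{t-1})Y^T.
\end{equation}
The main substantive task is then to identify the rank-one complementary projector. By Möbius inversion on the Boolean lattice, the one-dimensional kernel of $X_{t-1}$ is spanned by $w_{t-1}[S] \coloneqq (-1)^{|S|}$: indeed, any $w$ in the kernel satisfies $\sum_{S \subseteq a}w[S] = 0$ for every nonempty $a$ with $|a| \leq t-1$, which forces $w[S] = (-1)^{|S|}w[\emptyset]$. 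Since $\|w_{t-1}\|^2 = \alpha_{t-1}$, one has $I - P_{t-1} = w_{t-1}w_{t-1}^T / \alpha_{t-1}$.

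The calculation is then immediate: for each $b \in \Chanset_t$,
\begin{equation}
(Yw_{t-1})[b] = \sum_{S \subsetneq b}(-1)^{|S|} = -(-1)^t = (-1)^{t+1},
\end{equation}
using $\sum_{S \subseteq b}(-1)^{|S|} = 0$. Hence $Yw_{t-1} = (-1)^{t+1}\mathbf{1}_t$ and $Y(I - P_{t-1})Y^T = \allones{t}/\alpha_{t-1}$, which is the claimed formula. The base case $t = 1$ is given by the convention $M_1/M_0 \coloneqq M_1 = I + \allones{1}$ together with $\alpha_0 = 1$. The main conceptual hurdle, I expect, is spotting the factorization $M_t = X_tX_t^T$; once it is in place, the proof reduces to computing the kernel of an incidence matrix via Möbius inversion, a Boolean-Fourier manipulation in the spirit of the rest of the paper.
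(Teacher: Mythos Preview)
Your proof is correct and takes a genuinely different route from the paper's. The paper derives \Cref{lemma:Schur-Complement-M-Simple} as a special case of a stronger, entry-wise formula for all iterated Schur complements $M_t\su{i}$ (\Cref{lemma:Schur-Complement-M-Full}), proved by induction on $i$; each induction step uses the Sherman--Morrison formula and a chain of binomial-coefficient identities culminating in Pascal's rule. Your argument bypasses all of this by observing the Gram factorisation $M_t = X_t X_t^T$ with the incidence matrix $X_t[a,S]=\iverson{S\subseteq a}$, after which the Schur complement collapses to $I + Y(I-P_{t-1})Y^T$ and the problem reduces to identifying the one-dimensional kernel of $X_{t-1}$, which Möbius inversion does immediately. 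The paper's approach yields more information (the full description of every $M_t\su{i}$), which is not actually needed for \Cref{lemma:MPosDef}; your approach is shorter, more conceptual, and makes transparent \emph{why} the Schur complement is a rank-one perturbation of the identity---it is literally the complementary projector onto $\ker X_{t-1}$, conjugated by $Y$. Both proofs are self-contained, but yours avoids the somewhat delicate bookkeeping of the inductive computation.
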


\Cref{lemma:MPosDef} is a simple corollary of this statement:
\begin{proof}[Proof of \Cref{lemma:MPosDef}]
Clearly, $M_t / M_{t-1}$ is positive-definite. Since a matrix is positive-definite if the upper left block and its Schur complement are both positive-definite, it follows
% by induction
that $M_t$ is positive-definite.
\end{proof}

For the proof of \Cref{lemma:Schur-Complement-M-Simple}, some further preliminaries and notation are needed.
We can view $M_t$ as a block matrix consisting of blocks $(M_{w,w\primed}\ : w,w\primed \in \{1,\ldots,t\})$, where $M_{w,w\primed}$ labels the block whose rows are indexed by sets of size $w$ and whose columns are indexed by sets of size $w\primed$.
By repeatedly applying the quotient property of Schur complements  (\eqref{eq:SchurQuotientProperty} in the Appendix),
we can express $M_t / M_{t-1}$ as a chain of Schur complements:
\begin{align}
M_t / M_{t-1} & = (M_t / M_{t-2}) / (M_{t-1} / M_{t-2}) \nonumber \\
& = [(M_t / M_{t-3}) / (M_{t-2} / M_{t-3})]/(M_{t-1}/M_{t-2}) \nonumber \\ 
& = [[[(M_t/M_1)/(M_2/M_1)] / (M_3/M_1)] \ldots] /(M_{t-1} / M_{t-2}) \nonumber \, .
\end{align}
This chain is described by the recursive definitions
\begin{align}
&M_t\su 0 = M_t \, , 
\\
&M_t\su 1 = (M_t / M_1) \, ,
\\
&M_t\su i = M_t\su{i-1} / (M_i / M_{i-1}) \, .
\label{eq:def:M_tsusi}
\end{align}
We use the same labeling of blocks for $M_t\su i$ as for $M_t$, i.e.\ $M\su i_{w,w\primed}$ is the block of $M\su i_t$ whose rows are labeled by sets of weight $w$ and whose columns are labeled by sets of weight $w\primed$, i.e.\
\begin{equation}
M\su i_{w,w\primed} = [M_t\su{i-1}]_{w,w\primed} \, .
\end{equation}

Of course, this is only well-defined if such a block actually exists in $M\su i_t$, i.e.\ if $w,w\primed > i$. In this case neither the entries nor the dimensions of $M\su i_{w,w\primed}$ depend on $t$, so the index t is indeed not needed.

The following useful observations follow directly from the properties of the Schur complement:
\begin{lemma}
The matrix $M_t$ defined in \eqref{eq:def:M} and the matrices $M_t\su i$ defined in \eqref{eq:def:M_tsusi} have the following properties:
\
\begin{enumerate}
\item $M_t\su i = M_t / {M_i}$ \label{prop:Msusi-Schur }
\item $M_{i}/M_{i-1} = M\su{i-1}_{i,i}$  

\item $M_{w,i}\su{i-1} = (M_{i,w}\su{i-1})^T$
\item For any $w,w\primed > i$:

 $M_{w,w\primed}\su i = M_{w,w\primed}\su{i-1} - M\su{i-1}_{w,i}(M_i/ M_{i-1})^{-1}M\su{i-1}_{i,w\primed}$ \label{prop:Msusi-block}
\end{enumerate}
\label{lemma:Schur-properties-list}
\end{lemma}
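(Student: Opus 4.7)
The plan is to derive all four properties directly from two standard facts about Schur complements, both collected in \Cref{sec:SchurComplements}: the \emph{quotient property} $(A/B)/(C/B) = A/C$ for nested invertible principal submatrices $B \subseteq C \subseteq A$, and the fact that the Schur complement of a symmetric matrix with respect to a symmetric invertible principal block is again symmetric. No combinatorics or explicit knowledge of the entries of $M_t$ is needed; the argument reduces entirely to manipulating block decompositions.

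For property 1 I would proceed by induction on $i$. The base case $i=1$ is the definition of $M_t\su 1$. For the inductive step, assuming $M_t\su{i-1} = M_t/M_{i-1}$, the recursion \eqref{eq:def:M_tsusi} yields $M_t\su i = (M_t/M_{i-1})/(M_i/M_{i-1})$, and the quotient property applied to the nested chain $M_{i-1} \subseteq M_i \subseteq M_t$ of principal submatrices collapses this to $M_t/M_i$, as claimed.

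Property 2 is a direct block computation. Partitioning $M_i$ with $M_{i-1}$ in the top-left and the $(i,i)$-block of $M_t$ in the bottom-right, the Schur complement $M_i/M_{i-1}$ is given by the standard formula using the off-diagonal blocks with row (or column) labels of size $i$ and the complementary labels of size at most $i-1$. On the other hand, extracting the $(i,i)$-block of $M_t/M_{i-1}$ only requires inverting $M_{i-1}$ and multiplying by those very same off-diagonal blocks, so it reduces to the same expression. Property 3 then follows immediately because $M_t$ is symmetric by \eqref{eq:def:M}, hence $M_t\su{i-1} = M_t/M_{i-1}$ is symmetric, and off-diagonal blocks of a symmetric block matrix are transposes. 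Property 4 is the explicit Schur complement formula applied to $M_t\su i = M_t\su{i-1}/(M_i/M_{i-1})$, using property 2 to identify $M_i/M_{i-1}$ as literally the $(i,i)$-block of $M_t\su{i-1}$ and property 3 to express the two off-diagonal terms as $M^{(i-1)}_{w,i}$ and $M^{(i-1)}_{i,w'}$.

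The only genuine subtlety, and the main thing to watch, is that each Schur complement used above is well-defined, i.e.\ that $M_{i-1}$ and $M_i/M_{i-1}$ are invertible for every $i \leq t$. This is not part of the present lemma itself but is supplied by \Cref{lemma:Schur-Complement-M-Simple}, which produces the explicit positive-definite form $I + \allones{i}/\alpha_{i-1}$ for $M_i/M_{i-1}$. Once invertibility is granted, all four properties become purely formal consequences of the definitions, and no further obstacle remains.
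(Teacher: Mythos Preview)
Your proposal is correct and follows essentially the same route as the paper: induction with the quotient property for (1), a direct block computation for (2), symmetry of $M_t$ for (3), and the Schur complement formula combined with (2) and (3) for (4). Your explicit remark on the invertibility of $M_{i-1}$ and $M_i/M_{i-1}$ is a useful clarification; the paper leaves this implicit and relies on the simultaneous inductive proof of \Cref{lemma:Schur-Complement-M-Full} to supply it step by step.
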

\begin{proof}[Proof of \Cref{lemma:Schur-properties-list}]
\
\begin{enumerate}
\item
For $i = 0,1$ the statement holds by definition (interpreting $M_t\su i / M_0 = M_t \su i$). 
The statement then follows by induction, using the quotient property \eqref{eq:SchurQuotientProperty}.

\item 
We denote with $M_{>i-1}$ the sub-matrix of $M_t$ whose rows and columns are labeled by sets of size $>i-1$. Furthermore, $M_{>i-1,i-1}$ denotes the sub-matrix whose rows are labeled by sets of size $>i-1$ and whose columns are labeled with sets of size $i-1$. 
The matrix $M_{i-1,>i-1}$ is defined analogously 
and for any matrix we use a subscript $_{i,i}$ to denote the block whose rows and columns are labeled by sets of weight $i$. 
Remember the notation $M\su{i-1}_{i,i} \coloneqq [M\su{i-1}_t]_{i,i}$.
Then,
\begin{align*}
M\su{i-1}_{i,i} 
&= [M_t / M_{i-1}]_{i,i} 
\\
&= 
[M_{>i-1} -  M_{>i-1,i-1}M_{i-1}^{-1} 
M_{i-1,>i-1}]_{i,i} \\
& =  M_{i,i} - [M_{>i-1,i-1}M_{i-1}^{-1}M_{i-1,>i-1}]_{i,i} \\
&=M_{i,i} - M_{i,i-1} M_{i-1}^{-1} M_{i-1,i} 
\\
&= M_i / M_{i-1} \, .
\end{align*}

\item This statement holds because $M_t$ is symmetric.

\item The calculation is analogous to the one above. Using property \ref{prop:Msusi-Schur } and the definition of Schur complements \eqref{eq:SchurComplement}, we obtain
\begin{align*}
M\su{i}_{w,w\primed} &= [M_{>i}\su{i-1} - M_{>i,i}\su{i-1}(M_{i,i}\su{i-1})^{-1}M_{i,>i}\su{i-1}]_{w,w\primed} \\
&= M\su i_{w,w\primed} - M_{w,i}\su{i-1}(M_{i,i}\su{i-1})^{-1}M_{i,w\primed}\su{i-1}\, .
\end{align*}
We substitute $M_{i,i}\su{i-1}$ with $M_i / M_{i-1}$, since we use this form later.
\end{enumerate}
\end{proof} 

\Cref{lemma:Schur-Complement-M-Simple} is now a direct consequence of the following more explicit statement. 

\begin{lemma}
\label{lemma:Schur-Complement-M-Full}
$M_t\su{i}$ is given entry-wise by
\begin{equation}
M_t\su{i}[ a, b] =  f\su{i+1}(| a\cap  b|) + u\su{i}(| a|,| b|) \, ,
\label{eq:Msusi-final}
\end{equation}
where
\begin{align}
&f\su{i}(x) = \sum_{k = i}^x \binom{x}{k} = 2^x - \sum_{k=0}^{i-1}\binom{x}{k} \\
&u\su{i}(w,w\primed) = \frac{\binom{w-1}{i}\binom{w\primed-1}{i}}{\alpha_i} \label{eq:Ususi-Definition} \\
&\alpha_i = \sum_{k = 0}^{i} \binom{n}{k}
\end{align}
\end{lemma}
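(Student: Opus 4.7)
The plan is an induction on $i \in \{0,\dots,t-1\}$, using the inductive expression for $M_t^{(i-1)}$ to evaluate the pivot block and off-diagonal blocks appearing in the block update from \Cref{lemma:Schur-properties-list}, item 4.

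\textbf{Base case} ($i=0$): Since $\alpha_0 = \binom{n}{0} = 1$, we have $u^{(0)}(w,w')=1$ for every $w,w'\ge 1$, and $f^{(1)}(x) = 2^x - 1$. Thus the right-hand side of \eqref{eq:Msusi-final} equals $2^{|a\cap b|} = M_t[a,b] = M_t^{(0)}[a,b]$.

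\textbf{Inductive step}: Assume the formula holds at level $i-1$. To compute $M_t^{(i)}$ via $M^{(i)}_{w,w\primed} = M^{(i-1)}_{w,w\primed} - M^{(i-1)}_{w,i}(M_i/M_{i-1})^{-1} M^{(i-1)}_{i,w\primed}$, I evaluate the three factors using the inductive formula.

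\emph{(i) Pivot block.} For $|a|=|b|=i$, the sum $f^{(i)}(|a\cap b|) = \sum_{k=i}^{|a\cap b|}\binom{|a\cap b|}{k}$ collapses to $\iverson{a=b}$ because $|a\cap b|\le i$, and $u^{(i-1)}(i,i)=1/\alpha_{i-1}$. Hence $M_i/M_{i-1} = I + \mathbf{1}_i\mathbf{1}_i^T/\alpha_{i-1}$, matching \Cref{lemma:Schur-Complement-M-Simple}. Sherman--Morrison combined with $\alpha_i = \alpha_{i-1}+\binom{n}{i}$ gives the inverse $(M_i/M_{i-1})^{-1} = I - \mathbf{1}_i\mathbf{1}_i^T/\alpha_i$.

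\emph{(ii) Off-diagonal blocks.} For $|a|=w$, $|c|=i$, the same collapse gives $f^{(i)}(|a\cap c|) = \iverson{c\subseteq a}$, so $M_{w,i}^{(i-1)} = J_{w,i} + (\binom{w-1}{i-1}/\alpha_{i-1})\mathbf{1}_w\mathbf{1}_i^T$, where $J_{w,i}[a,c] \coloneqq \iverson{c\subseteq a}$.

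\emph{(iii) Product.} The only combinatorial kernel is $J_{w,i}J_{w\primed,i}^T[a,b] = \binom{|a\cap b|}{i}$, obtained by counting $i$-subsets of $a\cap b$. The three remaining cross-terms reduce via the identities $J_{w,i}\mathbf{1}_i = \binom{w}{i}\mathbf{1}_w$ and $\mathbf{1}_i^T\mathbf{1}_i = \binom{n}{i}$, and contribute only scalar multiples of $\mathbf{1}_w\mathbf{1}_{w\primed}^T$. Subtracting the resulting product from $M^{(i-1)}_{w,w\primed}$ produces two pieces: the $-\binom{|a\cap b|}{i}$ contribution exactly converts $f^{(i)}(|a\cap b|)$ into $f^{(i+1)}(|a\cap b|)$ by stripping the $k=i$ summand, while all other contributions must collect into the constant correction $u^{(i)}(w,w\primed) - u^{(i-1)}(w,w\primed)$.

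\textbf{Main obstacle}: the last matching is the only non-routine algebra. After clearing $\alpha_i\alpha_{i-1}$ and substituting $\alpha_i = \alpha_{i-1}+\binom{n}{i}$, the required equality reduces to the symmetric binomial identity
\begin{equation*}
\binom{w\primed-1}{i-1}\binom{w}{i} - \binom{w-1}{i}\binom{w\primed}{i} = \binom{w-1}{i-1}\binom{w\primed-1}{i-1} - \binom{w-1}{i}\binom{w\primed-1}{i},
\end{equation*}
which follows from two applications of Pascal's rule $\binom{w}{i}=\binom{w-1}{i}+\binom{w-1}{i-1}$ (and its counterpart for $w\primed$) on the left-hand side. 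Once this identity is established, the induction closes and \eqref{eq:Msusi-final} follows.
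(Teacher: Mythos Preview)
Your proposal is correct and follows essentially the same route as the paper: induction on $i$, identification of the pivot block $M_i/M_{i-1}=I+\mathbf 1_i\mathbf 1_i^T/\alpha_{i-1}$ from the inductive hypothesis, Sherman--Morrison for its inverse, the observation that the cross term $\sum_{|c|=i}\iverson{c\subseteq a}\iverson{c\subseteq b}=\binom{|a\cap b|}{i}$ is precisely what downgrades $f^{(i)}$ to $f^{(i+1)}$, and a final constant-term verification via Pascal's rule. The only cosmetic difference is that the paper packages the constant pieces into auxiliary quantities $S^{(i)}(w)$ and $T^{(i)}(w,w')$ and then factors $[\binom{w}{i}-\binom{w-1}{i-1}][\binom{w'}{i}-\binom{w'-1}{i-1}]=\binom{w-1}{i}\binom{w'-1}{i}$, whereas you split the off-diagonal block as $J_{w,i}$ plus a rank-one piece and arrive at an equivalent (if asymmetrically written) binomial identity; both reductions are the same computation.
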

To recover \Cref{lemma:Schur-Complement-M-Simple}, notice that $f\su{i}(x)= \iverson{x = i}$ if $x \leq i$ and $u\su{i-1}(i,i) = \frac{1}{\alpha_{i-1}}$. Thus we obtain
\begin{equation}
\begin{split}
(M_i / M_{i-1})[ a, b] &= M\su{i-1}_{i,i}[ a, b] \\ &= \iverson{| a \cap  b| =  i} + u\su{i-1}(i,i) \\ &= \iverson{ a =  b}+ \frac{1}{\alpha_{i-1}} \, ,
\end{split}
\label{eq:Mi/i-1}
\end{equation}
where we used that $| a| = | b|= i$ because we are considering the block indexed by $i, i$. Choosing $i = t$ yields \Cref{lemma:Schur-Complement-M-Simple}.

\begin{proof}[Proof of \Cref{lemma:Schur-Complement-M-Full}]
We compute $M_t\su{i}$ block-wise. 
Remember that the rows of $M_{w,w\primed}\su{i-1}$ are indexed by subsets of $[n]$ of weight $w$ and the columns by subsets of $[n]$ of size $w\primed$, and that $M_t\su{i-1} = M_t / M_{i-1}$ only contains blocks $M_{w,w\primed}\su{i-1}$ with $w,w\primed \geq i$. 

We prove the lemma by induction. 
The base case $i = 0$ is follows directly as
\begin{equation}
M_{w,w\primed}\su{0}[ a, b] = M_{w,w\primed}[ a, b] 
= 2^{| a\cap  b|} = 2^{| a\cap  b|} -1 +1 
=f\su{1}(| a\cap  b|) + u\su 0(w,w\primed) \, .
\end{equation}
For the induction step we assume that \eqref{eq:Msusi-final} holds for $i-1$ and we will prove it for $i$. 
Using the induction hypothesis, we apply \eqref{eq:Mi/i-1} to obtain
\begin{equation}
M_{i}/M_{i-1} 
= 
M\su{i-1}_{i,i} 
=
I + \frac{\allones{i}}{\alpha_{i-1}} \, ,
\end{equation}
where $\mathbf{1}_i$ is again the all ones vector of length $\binom{n}{i}$. The inverse of this matrix can be computed via the Sherman-Morrison formula to be 
\begin{equation}
(M_{i}/M_{i-1})^{-1} = I - \frac{\allones{i}}{\alpha_{i-1} + \mathbf{1}_i^T\mathbf{1}^{}_i} = I - \frac{\allones{i}}{\alpha_{i-1} + \binom{n}{i}} = I - \frac{\allones{i}}{\alpha_{i}} \, .
\end{equation}
Combining this expression with property \ref{prop:Msusi-block} in \Cref{lemma:Schur-properties-list}, we obtain
\begin{equation}
M_{w,w\primed}\su i = M_{w,w\primed}\su{i-1} - M\su{i-1}_{w,i}M\su{i-1}_{i,w\primed} 
+ \frac{M\su{i-1}_{w,i}\allones{i}M\su{i-1}_{i,w\primed}}{\alpha_{i}} \, .
\label{eq:MwwExpansion}
\end{equation}
Calculating this term by term, starting with the last term, we obtain
\begin{align*}
(\mathbf{1}_i^T M\su{i-1}_{i,w\primed})[ a] &= \sum_{| c| = i} M\su{i-1}_{i,w\primed}[ c, a]  
= \sum_{| c| = i} (f\su{i}(| a\cap  c|) + u\su{i-1}(i,w\primed)) \\
&= \sum_{| c| = i} (\iverson{ c \subseteq  a} + u\su{i-1}(i,w\primed)) 
= \binom{w\primed}{i} +\binom{n}{i}u\su{i-1}(i,w\primed) \, ,
\end{align*}
where $\sum_{| c| = i}$ indicates a sum over all subsets of $[n]$ of size $i$. The third equality used that
\begin{equation}
f\su{i}(| a\cap  c|) = \iverson{| a\cap  c| = i} =\iverson{ c\subseteq  a}
\end{equation}
 since $| c| = i$ and the last equality used $| a|= w\primed$. Defining
\begin{equation}
S\su{i}(w\primed) = \binom{w\primed}{i} +\binom{n}{i}u\su{i-1}(i,w\primed)
\label{eq_Ssusi-Definition}
\end{equation} 
and using the symmetry of $M_{i,w}\su {i-1}$ we obtain
\begin{equation}
M\su{i-1}_{w,i}\allones{i}M\su{i-1}_{i,w\primed} = S\su{i}(w)S\su{i}(w\primed)\allones{i} \, .
\label{eq:M11M}
\end{equation}
Next, we consider the second term of \eqref{eq:MwwExpansion}, 
\begin{align}
(M\su{i-1}_{w,i}M\su{i-1}_{i,w\primed})[ a, b] \nonumber 
&= \sum_{| c| = i} M_{w,i}\su{i-1}[ a, c]M_{i,w\primed}\su{i-1}[ c, b] \nonumber \\
&= \sum_{| c| = i}[f\su i (| a\cap  c|) +u\su{i-1}(i,w)] 
[f\su i (|  c\cap  b|) +u\su{i-1}(i,w\primed)] \nonumber \\
&= \sum_{| c| = i}\iverson{ c\subseteq  a}\iverson{ c \subseteq  b}
%\nonumber \\ &\phantom{=}
 + u\su{i-1}(i,w)\sum_{| c| = i}\iverson{ c \subseteq  b}
\nonumber \\ &\phantom{=} 
+u\su{i-1}(i,w\primed)\sum_{| c| = i}\iverson{ c\subseteq  a}
%\nonumber \\ & \phantom{=}
 + \sum_{| c| = i}u\su{i-1}(i,w)u\su{i-1}(i,w\primed) \nonumber \\
&=\binom{| a\cap  b|}{i} + u\su{i-1}(i,w)\binom{w\primed}{i} \nonumber
%\\ &\phantom{=}
+u\su{i-1}(i,w\primed)\binom{w}{i} \nonumber 
 +\binom{n}{i}u\su{i-1}(i,w)u\su{i-1}(i,w\primed) \nonumber \\
&=\binom{| a\cap  b|}{i}+T\su{i}(w,w\primed) \, ,
\label{eq:MM}
\end{align}
where we defined 
\begin{equation}
T\su{i}(w,w\primed) 
= u\su{i-1}(i,w)\binom{w\primed}{i} 
+u\su{i-1}(i,w\primed)\binom{w}{i}  +\binom{n}{i}u\su{i-1}(i,w)u\su{i-1}(i,w\primed) \, .
\label{eq:Tsusi-Definition}
\end{equation}
Thus, combining \eqref{eq:M11M}, \eqref{eq:MM}, and the induction assumption \eqref{eq:Msusi-final} with \eqref{eq:MwwExpansion} we obtain
%\begin{equation}
\begin{align}
&M_{w,w\primed}\su i[ a, b] = f\su{i}(| a\cap  b|) + u\su{i-1}(w,w\primed) - \binom{| a \cap  b|}{i} - T\su i (w,w\primed) +\frac{S\su i (w)S\su i (w\primed)}{\alpha_i} \nonumber  \\
&= f\su{i+1}(| a\cap  b|) +\frac{S\su i (w)S\su i (w\primed) + \alpha_i(u\su{i-1}(w,w\primed) - T\su i (w,w\primed))}{\alpha_i} \label{eq:Mww-ST} \, ,
\end{align}
%\end{equation}
where we used $f\su{i+1}(| a\cap  b|)=f\su i (| a\cap  b|)- \binom{| a \cap  b|}{i}$.
Finally, we calculate the last term. By direct calculation from the definitions \eqref{eq:Tsusi-Definition}, \eqref{eq_Ssusi-Definition} we obtain
\begin{equation}
S\su i (w)S\su i (w\primed) - \binom{n}{i}T\su i (w,w\primed) = \binom{w}{i}\binom{w\primed}{i}\, ,
\label{eq:Ssusi-Tsusi}
\end{equation}
and thus,
\begin{align}
&S\su i (w)S\su i (w\primed) + \alpha_i(u\su{i-1}(w,w\primed) - T\su i (w,w\primed)) \nonumber \\
&= S\su i (w)S\su i (w\primed) - \binom{n}{i}T\su i (w,w\primed) + \alpha_{i-1}u\su{i-1}(w,w\primed) + \binom{n}{i}u\su{i-1}(w,w\primed) -\alpha_{i-1}T\su i (w,w\primed)\, ,
\label{eq:M_w-FinalEnumerator}
\end{align}
where we have used that $\alpha_i = \binom{n}{i} + \alpha_{i-1}$. 
Inserting the definitions of $u\su i$ and $T\su i$ from \eqref{eq:Ususi-Definition} and \eqref{eq:Tsusi-Definition}, we obtain $\alpha_{i-1}u\su{i}(w,w\primed) = \binom{w-1}{i-1}\binom{w\primed-1}{i-1}$ and $\binom{n}{i}u\su{i-1}(w,w\primed) -\alpha_{i-1}T\su i (w,w\primed) = - \binom{w-1}{i-1}\binom{w\primed}{i} -\binom{w\primed-1}{i-1}\binom{w}{i}$. Together with \eqref{eq:Ssusi-Tsusi} this yields
\begin{align*}
\eqref{eq:M_w-FinalEnumerator} &= \binom{w}{i}\binom{w\primed}{i} + \binom{w-1}{i-1}\binom{w\primed-1}{i-1}  - \binom{w-1}{i-1}\binom{w\primed}{i} -\binom{w\primed-1}{i-1}\binom{w}{i} \\
&= \left[\binom{w}{i}- \binom{w-1}{i-1}\right]\left[\binom{w\primed}{i}- \binom{w\primed-1}{i-1}\right] \\
&=\binom{w-1}{i}\binom{w\primed-1}{i} \, ,
\end{align*}
where the last step is based on Pascals identity.
Substituting this result back into \eqref{eq:Mww-ST} we obtain
\begin{equation}
M_{w,w\primed}\su i[ a, b] = f\su{i+1}(| a \cap  b|)+\frac{\binom{w-1}{i}\binom{w\primed-1}{i}}{\alpha_i} \, ,
\end{equation}
which finishes the proof.
\end{proof}

\section{Connection to Adaptive Weight Estimator}
\label{sec:AppendixSO-Solution}
Here, we show that the solution for the toric code derived in \Cref{sec:ToricCodeExample} coincides with the solution given by \citet{obrien_adaptiveweightestimator}.

The solution derived in \Cref{sec:ToricCodeExample} is 
\begin{equation}
E(Z_4) = \pm \sqrt{\frac{E(S_1)E(S_2)}{E(S_1S_2)}} \, .
\label{eq:SolutionToric-2}
\end{equation}
On the other hand, the solution given by \citet[eq. (14)]{obrien_adaptiveweightestimator} is
\begin{equation}
p_4 
= \frac 12 \mp \sqrt{\frac 14 - \frac{P(S_1 = S_2=1) - P(S_1 = 1)P(S_2 = 1)}{1-2P(S_1 \neq S_2)}} \, .
\label{eq:SO-Solution}
\end{equation}
Here, $P$ is used to denote the probabilities of events under random sampling of the errors. 
Note that in \cite{obrien_adaptiveweightestimator} the result is phrased in terms of expectation values of the stabilizer outcomes and errors viewed as taking values $0$ or $1$, which directly translates into the probabilities above.

To see that these two solutions coincide, first notice that $E(Z_4) = 1-2p_4$. Thus, \eqref{eq:SolutionToric-2} can be rewritten as
\begin{equation}
p_4 = \frac{1}{2} \mp \sqrt{\frac{E(S_1)E(S_2)}{4E(S_1S_2)}} \, .
\end{equation}
Similarly, we have $E(S_i) = 1-2P(S_i = 1)$ and 
% $E(S_1S_2) = (1-2P(S_1 \neq S_2))$
$E(S_1S_2) = 1-2P(S_1 \neq S_2)$. 
Using these equations, equality of the solutions can be shown as follows:
\begin{widetext}
\begin{equation*}
\begin{split}
\frac{E(S_1)E(S_2)}{4E(S_1S_2)} &= \frac{(1-2P(S_1 = 1))(1-2P(S_2 = 1))}{4(1-2P(S_1 \neq S_2))} \\
&= \frac{1-2P(S_1 = 1)-2P(S_2 = 1)+4P(S_1 = 1)P(S_2 = 1) + 1 - 2P(S_1 \neq S_2) - 1 + 2P(S_1 \neq S_2)}{4(1-2P(S_1 \neq S_2))} \\
&= \frac 14 - \frac{2P(S_1 = 1)+2P(S_2 = 1)-4P(S_1 = 1)P(S_2 = 1) - 2P(S_1 \neq S_2)}{4(1-2P(S_1 \neq S_2))} \\
%&= \frac 14 - \frac{2P(S_1 = 1,S_2 = 1) + 2P(S_1 = 1,S_2 = 0) +2P(S_1 = 1, S_2 = 1) + 2P(S_1 = 0,S_2 = 1)-4P(S_1 = 1)P(S_2 = 1) - 2P(S_1 =1, S_2 = 0) - 2P(S_1 = 0, S_2 = 1)}{4(1-2P(S_1 \neq S_2))} \\
&= \frac 14 - \frac{P(S_1 = S_2 = 1) - P(S_1=1)P(S_2 = 1)}{1-2P(S_1 \neq S_2)} \, , \\
\end{split}
\end{equation*} 
\end{widetext}
where in the last equality we used that
\begin{align*}
&P(S_1 = 1) + P(S_2 = 1) - P(S_1 \neq S_2) \\
&= P(S_1 = 1,S_2 = 1) + P(S_1 = 1,S_2 = 0) \\
& \phantom{=} + P(S_1 = 1, S_2 = 1) + P(S_1 = 0,S_2 = 1) \\
&\phantom{=}- P(S_1 =1, S_2 = 0) - P(S_1 = 0, S_2 = 1) \\
&= 2P(S_1 = S_2 = 1) \, .
\end{align*}
This shows that the two solutions \eqref{eq:SolutionToric-2}  and \eqref{eq:SO-Solution} are indeed equivalent.

%\addcontentsline{toc}{section}{References}
%\renewcommand{\doibase}[1]{https://dx.doi.org/\ifdefempty{#1}{}{#1}} %Fixes a bug with doi links in revtex 4-1, alternatively use revtex 4-2
\bibliographystyle{./myapsrev4-2}
\bibliography{bibliography}

\end{document}